




\documentclass[sigconf,nonacm]{aamas} 


\usepackage{balance} 
\usepackage[utf8]{inputenc}
\usepackage[T1]{fontenc}
\usepackage{multirow}
\usepackage{algorithm}
\usepackage{algpseudocode}
\usepackage{graphicx}
\usepackage{subfigure}
\usepackage{amsmath}
\usepackage{amsthm}



\setcopyright{ifaamas}
\acmConference[AAMAS '24]{Proc.\@ of the 23rd International Conference
on Autonomous Agents and Multiagent Systems (AAMAS 2024)}{May 6 -- 10, 2024}
{Auckland, New Zealand}{N.~Alechina, V.~Dignum, M.~Dastani, J.S.~Sichman (eds.)}
\copyrightyear{2024}
\acmYear{2024}
\acmDOI{}
\acmPrice{}
\acmISBN{}



\acmSubmissionID{269}


\title{Determining Winners in Elections with Absent Votes}


\author{Qishen Han}
\affiliation{
  \institution{Rensselaer Polytechnic Institute}
  \city{Troy}
  \country{United States}}
 \email{hnickc2017@gmail.com}

\author{Am\'elie Marian}
\affiliation{
  \institution{Rutgers University}
  \city{Piscataway}
  \country{United States}}
\email{amelie@cs.rutgers.edu}

\author{Lirong Xia}
\affiliation{
  \institution{Rensselaer Polytechnic Institute}
  \city{Troy}
  \country{United States}}
\email{xialirong@gmail.com}


\begin{abstract}
An important question in elections is the determine whether a candidate can be a winner when some votes are absent. We study this determining winner with  absent votes (WAV) problem when the votes are top-truncated. We show that the WAV problem is NP-complete for single transferable vote, Maximin, and Copeland, and propose a special case of positional scoring rule such that the problem can be computed in polynomial time. Our results in top-truncated rankings differ from the results in full rankings as their hardness results still hold when the number of candidates or the number of missing votes are bounded, while we show that the problem can be solved in polynomial time in either case. 
\end{abstract}



\keywords{computational social choice, multi-agent system, manipulation in voting}


         
\newcommand{\BibTeX}{\rm B\kern-.05em{\sc i\kern-.025em b}\kern-.08em\TeX}


\begin{document}


\pagestyle{fancy}
\fancyhead{}
\theoremstyle{definition}
\newtheorem{thm}{Theorem}
\newenvironment{thmbis}[1]
  {\renewcommand{\thethm}{\ref{#1}}%
   \addtocounter{thm}{-1}%
   \begin{thm}}
  {\end{thm}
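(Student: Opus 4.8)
The plan is to prove the statement as a standard NP-completeness result, splitting into membership in NP and NP-hardness, and I will target the single transferable vote (STV) case, which the abstract lists first among the hardness results. Membership is the easy half: a completion of the absent votes is a collection of top-truncated rankings of total description length polynomial in the input, and computing the STV winner of a full profile is polynomial time by simulating the elimination rounds. A nondeterministic machine can therefore guess the completion and verify in polynomial time that the distinguished candidate $c$ wins, so WAV under STV lies in NP.

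The substance is the hardness reduction, and here the paper's own remark guides the construction: since the abstract states that the problem becomes tractable once either the number of candidates or the number of missing votes is bounded, the reduction must necessarily scale \emph{both}. I would therefore reduce from a combinatorial problem whose instances mirror STV's round-by-round elimination, such as Exact Cover by 3-Sets or the specialized satisfiability variant underlying Bartholdi and Orlin's proof that STV manipulation is NP-complete. The idea is to treat the absent votes as a coalition of manipulators: the present votes are fixed so as to encode the instance, while the absent top-truncated votes must be filled in to steer the elimination order toward a round in which $c$ survives and ultimately wins. Concretely, I would introduce one \emph{element} (or \emph{clause}) candidate per combinatorial object together with auxiliary \emph{garbage} candidates, and calibrate the multiplicities of the present votes so that, with no help, the element candidates are eliminated in a prescribed sequence that knocks $c$ out early; a valid cover or satisfying assignment then corresponds exactly to a placement of the absent votes' first-place support that averts the lethal elimination and lets $c$ emerge.

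For correctness I would argue both directions. If the source instance has a solution, I exhibit the corresponding top-truncated completion and trace the elimination rounds to confirm that $c$ wins. Conversely, if some completion makes $c$ win, I would show that the induced first-place masses among the element candidates must satisfy the covering or consistency constraints of the source instance, so that a solution can be read off the winning completion. The top-truncated restriction is used crucially in this step: because a ballot may list only a prefix and then exhaust, I can bound how much a manipulator's vote spills over into later rounds, and this limited expressiveness is precisely what pins the elimination order down to the binary choices of the source problem.

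The main obstacle is taming STV's sequential, non-monotone dynamics. Unlike score-based rules, the STV winner depends on the entire elimination trajectory, and a single misplaced ballot can cascade through many rounds; the challenge is to engineer the candidate gadgets and present-vote counts so that the only remaining degrees of freedom for the absent votes are exactly the yes/no decisions of the source instance, with every other round \emph{forced} regardless of the manipulators' choices. Making this forcing robust, while keeping all ballots top-truncated and of polynomial size, will be the delicate part; I expect the same gadget-plus-coalition template, retargeted at pairwise margins rather than first-place counts, to yield the analogous hardness results for Maximin and Copeland.
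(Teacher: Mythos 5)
Your strategy is the paper's strategy: membership in NP by guess-and-check, and hardness by a reduction from (restricted) exact 3-cover in the style of Bartholdi and Orlin's STV-manipulation proof, with the absent votes cast as a manipulating coalition whose only real freedom is a set of yes/no choices mirroring the cover. Even your high-level mechanism matches: in the paper's construction, absent any help, an ``element'' candidate is eliminated at a critical round and its transferred votes make the rival $w$ unbeatable, while a correct placement of the absent votes' support averts exactly that elimination. So there is no disagreement about the route. The gap is that your proposal stops where the proof begins: everything you defer as ``the delicate part'' is the entire mathematical content. The paper's construction is very specific --- for each set $S_i$ a \emph{pair} of candidates $b_i, \bar{b}_i$ whose relative elimination in round $i$ encodes whether $S_i$ joins the cover (they are engineered to be exactly tied, so a single absent ballot $[\bar{b}_i \succ b_i \succ c \succ w]$ flips the outcome); for each element $x_j$ a candidate $d_j$ that receives transfers only from eliminated $b_i$ with $x_j \in S_i$; two contenders $c$ and $w$ with first-place counts $12q$ and $12q-1$; and a counting argument (each of the $q/3$ absent votes can rescue at most one $\bar{b}_i$, because after the rescue the vote stays attached to $\bar{b}_i$) showing that $c$ wins only if the rescued indices form an exact 3-cover. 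Without exhibiting such numbers and verifying both directions round by round, nothing has been proved.

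One substantive point in your sketch is also backwards. You claim the top-truncated restriction is ``used crucially'' to bound how manipulator ballots spill into later rounds; in fact the spillover bound in the paper is the attachment argument just described, which holds verbatim for full rankings (it is already in Bartholdi--Orlin) and owes nothing to truncation. In the paper, truncation is an \emph{obstacle}, not a tool: the reduction is the full-ranking one with every ballot cut to length $\ell$, and the appendix checks that the transfers that thereby vanish (votes that would have flowed down to $w$ or $c$) change no elimination, since whichever of $c, w$ survives already holds at least $24q-1$ votes against at most roughly $20q + q/3$ for anyone else. What truncation genuinely changes is the boundary of the theorem: for $\ell = 1$ STV degenerates to plurality and the problem is polynomial, so the hardness claim is specifically $\ell \ge 2$, and any correct reduction must exploit transfers within the \emph{known} profile --- a constraint your sketch, phrased purely in terms of first-place support, never confronts.
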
}
\newtheorem{dfn}{Definition}
\newenvironment{dfnbis}[1]
  {\renewcommand{\thedfn}{\ref{#1}}%
   \addtocounter{dfn}{-1}%
   \begin{dfn}}
  {\end{dfn}}
\newtheorem{conj}{Conjecture}
\newtheorem{lem}{Lemma}
\newenvironment{lembis}[1]
  {\renewcommand{\thelem}{\ref{#1}}%
   \addtocounter{lem}{-1}%
   \begin{lem}}
  {\end{lem}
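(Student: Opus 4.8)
The plan is to establish NP-completeness of the WAV problem under single transferable vote (STV) in two parts. First I would argue membership in NP: a certificate is a completion of every absent vote into a top-truncated ranking; given such a completion (and a fixed tie-breaking order, so that winner determination itself is polynomial), a single run of STV over the combined profile of present and completed votes decides whether the distinguished candidate $c$ is a winner. Since this check is polynomial, the guess-and-check verifier runs in polynomial time.

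For hardness I would reduce from a known NP-complete problem, and Exact Cover by 3-Sets (X3C) is a natural source: STV eliminates candidates one at a time, so "covering" can be encoded by which candidates are forced out, while the completion of the absent votes plays the role of selecting a subfamily. Given an X3C instance with ground set $U$ and a family $\mathcal{S}$ of $3$-element subsets, I would build an election with a target $c$, one element-candidate per $u \in U$, and auxiliary garbage candidates used to pad scores. The present votes would be chosen to pin the initial plurality scores so that the first several elimination rounds are forced and independent of the absent votes, isolating a decision phase in which completing an absent vote amounts to choosing a set $S \in \mathcal{S}$.

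The heart of the construction is the gadget making each absent vote behave like "choosing $S$": completing it to place weight on the three element-candidates of $S$ transfers exactly enough to eliminate them in the intended order, so that $c$ survives to the final round precisely when the chosen sets form an exact cover. I would then prove both directions: an exact cover yields a completion under which every element-candidate is eliminated before $c$, leaving $c$ a winner; conversely, any completion making $c$ win must eliminate all element-candidates, and the top-truncation constraint together with the vote budget forces the implied subfamily to be a partition of $U$, hence an exact cover.

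The main obstacle I anticipate is controlling the \emph{adaptive, round-by-round} behavior of STV: because eliminations redistribute votes, a completion meant to knock out one element-candidate can perturb later rounds, and top-truncated rankings (which may simply decline to rank some candidates) give the adversary extra but awkward freedom. Making the first phase of eliminations robust to tie-breaking, and ensuring transfers in the decision phase cannot be "reused" to cover an element twice, is where the construction must be tuned carefully; I would handle this by inserting enough buffer candidates and calibrating the present-vote weights so that every round carries a strict margin except at the intended decision points.
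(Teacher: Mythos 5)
Your proposal does not address the statement in question. The statement is the paper's only lemma (Lemma~\ref{lem:McGarvey}): a McGarvey-type construction showing that for any constant $\ell \ge 2$, any candidate set $M$ with $m \ge \ell$, and any two candidates $a, b \in M$, there exists a profile of $poly(m)$ top-$\ell$ votes whose weighted majority graph has exactly one nonzero edge, namely $a \to b$ with weight $2$. What you have sketched instead is the NP-completeness of \wpp-$\stv_{\ell}$, which is Theorem~\ref{thm:STV} of the paper --- a different result, and one that does not even rely on this lemma (the lemma is the tool used in the Maximin and Copeland reductions, not in the STV reduction).

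Nothing in your sketch can be repurposed for the lemma, because the lemma has no complexity-theoretic content at all: it is a purely constructive, symmetry-based statement about weighted majority graphs. The paper proves it in two steps. First, take as votes, for every $\ell$-subset $M_{\ell}$ of $M$, every ranking of $M_{\ell}$; this profile has $O(m^{\ell})$ votes, and by symmetry every edge of its WMG has weight $0$. Second, pick a single vote in which $b$ is ranked first and $a$ second, and swap $a$ and $b$ in that one vote; the only pairwise comparison affected is the one between $a$ and $b$, so the resulting WMG has exactly one nonzero edge, $a \to b$ with weight $2$. For completeness: even read as an attempt at Theorem~\ref{thm:STV}, your outline follows the paper's route (a reduction from restricted exact 3-cover in the style of the Bartholdi--Orrill STV hardness proof) but stops at the planning stage; the substance of that proof lies in the explicit vote counts that force the first $q$ elimination rounds and the budget argument showing any winning completion encodes an exact cover, none of which your sketch supplies.
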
}
\newtheorem{ex}{Example}
\newtheorem{Alg}{Algorithm}
\newtheorem{prob}{Problem}
\newtheorem{question}{Question}
\newtheorem{prop}{Proposition}
\newtheorem{coro}{Corollary}
\newenvironment{prof}{\noindent{\em Proof.}\rm }{\hfill $\Box$ }
\newenvironment{sketch}{\noindent{\em Proof sketch.}\rm }{\hfill $\Box$ }
\newtheorem{cond}{Condition}
\newtheorem{claim}{Claim}
\newtheorem{mes}{Message}
\newtheorem{view}{Viewpoint}
\newtheorem{calculation}{Calculation}
\newtheorem{obs}{Observation}
\newtheorem{RQ}{Research Question}
\newcounter{newct}
\newcommand{\rqu}{%
        \stepcounter{newct}%
        Research Question~\thenewct.}

\newcommand\qishen[1]{{\color{blue} \footnote{\color{blue}Qishen: #1}} }

\newcommand\amelie[1]{{\color{blue} \footnote{\color{blue}Amelie: #1}} }

\newcommand{\blue}[1]{\textcolor{blue}{#1}}

\newcommand{\cd}{m} 
\newcommand{\cds}{\mathcal{A}} 
\newcommand{\cdl}{\mathcal{L(A)}} 
\newcommand{\rk}{R} 
\newcommand{\vt}{n} 
\newcommand{\un}{t} 
\newcommand{\tg}{c} 
\newcommand{\wn}{w} 
\newcommand{\topk}{k} 
\newcommand{\prf}{P} 
\newcommand{\prfn}{P}
\newcommand{\prft}{P'}
\newcommand{\rl}{r} 
\newcommand{\stv}{\text{STV}}
\newcommand{\maximin}{\text{Maximin}}
\newcommand{\stvk}{\text{STV}_k}
\newcommand{\istv}{\overline{\text{STV}}}
\newcommand{\copeland}{\text{Cd}_\alpha}
\newcommand{\cpd}{\text{Cd}}
\newcommand{\lex}{{\sc Lex}}
\newcommand{\wpp}{\text{WAV}}
\newcommand{\wppr}[1]{\text{\sc WP}\text{-}{#1}}
\newcommand{\rtr}[1]{\text{\sc RT}\text{-}{#1}}
\newcommand{\rs}{S}
\newcommand{\rss}{\mathcal{S}}
\newcommand{\re}{x}
\newcommand{\res}{X}

\newcommand{\ba}{b}
\newcommand{\bb}{\overline{b}}
\newcommand{\da}{d}

\newcommand{\idx}{I}
\newcommand{\jdx}{J}

\newcommand{\calS}{\mathcal{S}}
\newcommand{\wmg}{\text{WMG}}

\newcommand{\vs}{\vec{s}}
\newcommand{\scr}{a}


\maketitle 


\section{Introduction} 
In a multi-agent system, voting is one of the most widely applied methods to aggregate preference and make collective decisions. Voting has been rooted in the democratic procedure while emerging as new techniques to other scenarios including search engines~\citep{dwork2001rank}, crowdsourcing~\citep{mao2013better}, and blockchain governance~\citep{grossi2022social}. 

An important question that arises in these scenarios is the need to know the outcome without knowing the preferences of all the voters. 
There are several common reasons in practice why some votes may not be available right away for tallying: delay of absentee ballots, the forecasting of votes with polling results, or the contestation of the validity of some ballots. 

\begin{ex}
    Suppose a city runs its mayoral election and adopts the single transferable vote (STV) (or so-called ranked-choice voting (RCV)) as the voting rule. Unfortunately, some of the absentee ballots have experienced substantial delays and are suspected to be lost. The official investigation will take about one month to locate these missing ballots, causing a significant disruption to the usual political proceedings. Is it possible for the city officials to offer a forecast of all potential winners by considering the current votes and estimating the number of undisclosed ballots? 
\end{ex}

In fact, such examples have happened in practice in U.S. localities that have recently switched to ranked-choice voting. In 2018, the results of the RCV San Francisco mayoral election took a week to be confirmed and tabulated, largely due to the late counting of mail-in ballots.
The results of the 2021 New York City primary election were not certified until {\em a full month} after the election due to the large number of absentee ballots.  These delays, the lack of transparency, and the incomplete information, or lack thereof, on the outcome of cast ballots led to distrust in the election process~\cite{hill2023}

Winner determination with absent votes also provides justification for a candidate's victory in ballots susceptible to manipulation~\citep{baumeister2023complexity}. A proposed heuristic~\cite{jelvaniHCOMP22} empirically evaluated on NYC election night data show promises in identifying election winners, or narrowing down the field of possible winners in a single transferrable vote scenario.

From another perspective, determining winners with absent votes is also related to the classic problem of {\em coalitional manipulation} in computational social choice. In this context, a group of manipulators influence the outcome by strategically adding specific votes to the ballot. 

Originated from the famous Gibbard-Satterthwaite Theorem~\citep{Gibbard73:Manipulation,Satterthwaite75:Strategy}, there have been extensive theoretical studies on this problem in the computational social choice literature from the perspective of coalitional manipulation~\citep{Faliszewski10:AI, Faliszewski10:Using}. Subsequent studies characterize the complexity of such problems for different voting rules including STV~\citep{Bartholdi91:Single}, positional scoring rules~\cite{Davies11:Complexity,Betzler11:Unweighted}, Copeland~\citep{Faliszewski10:Manipulation}, and Maximin~\citep{Xia09:Complexity}. 

However, most previous studies assume that each voter casts a complete linear order, i.e. a {\em full ranking} toward all the candidates. In contrast, votes where voters cast a few {\em top preferences} become increasingly common in real-world scenarios. Top-ranked voting is more practical to implement because it simplifies the computation of the winner and prevents voters without full preferences from casting random votes and corrupting the ballot.
Moreover, the results of coalitional manipulation under full rankings do not extend to the winner prediction for top-ranking votes. \citet{Narodytska14:Computational} studies vote where top-truncated votes are allowed. \citet{menon2015complexity} study a weighted version of coalitional manipulation for top-truncated votes. Yet their hardness results also do not apply due to the unbounded top rankings and the incorporation of weights in the voting process.

Therefore, the following question remains open: {\bf What is the complexity of determining possible winners in top-ranked voting election with absent votes?}










\subsection{Our Contribution}
\begin{table*}[htbp]
\begin{tabular}{@{}llll@{}}
\toprule
 Voting Rule &  Top-$\ell$ & Up-to-$L$  & Full Ranking\\ \midrule
STV  & NPC for $\ell \ge 2$ (Theorem~\ref{thm:STV}) & NPC (Theorem~\ref{thm:stvu}) & NPC~\citep{Bartholdi91:Single} \\
Maximin & NPC for $\ell \ge 2$ (Theorem~\ref{thm:maximin}) & NPC for $L \ge 2$ (Theorem~\ref{thm:maximinu}) & NPC~\citep{Xia09:Complexity}   \\
Copeland  & NPC for $\ell \ge 2$ (Theorem~\ref{thm:copeland}) & NPC for $L \ge 2$ (Theorem~\ref{thm:copelandu})  & NPC~\citep{Faliszewski10:Manipulation}  \\
\multirow{2}{*}{Positional Scoring Rule}  & P for $\ell=2$ (Corollary~\ref{coro:psr2}) & P for $\uparrow$-rounding~\cite{Narodytska14:Computational} & P for plurality and Veto \\
& P for $\scr_2 = \cdots = \scr_{\ell}$ (Theorem~\ref{thm:psr}) & P for $\downarrow$-rounding when $\scr_2 = \cdots = \scr_{L}$ (Theorem~\ref{thm:psrud})   & NPC for Borda~\citep{Davies11:Complexity,Betzler11:Unweighted}  \\ \bottomrule
\end{tabular}
\caption{Complexity of predicting winner with absent votes under full ranking, top $\ell$ ranking, and up to $L$ ranking.\label{tbl:res}}
\end{table*}
We investigate the computational complexity of determining {\em winner with absent votes (WAV)} under multiple voting rules. We focus on two specific settings of top rankings. In the {\em top-$\ell$} setting, every voter is asked to provide their top-$\ell$ preference. And in the {\em up-to-$L$} setting, every voter can list his/her up to $L$ top preference. 

We first show that, when either the number of candidates or the quantity of absent votes is bounded, the WAV problem can be solved in polynomial time under both top-$\ell$ and up-to-$L$ settings. This distinguishes our work from the previous papers under full-ranking settings, in which the hardness results hold even for bounded candidates or a bounded number of manipulators. 

Subsequently, we show that in STV, Maximin, and Copeland, determining the winner with absent votes is NP-complete for every $\ell \ge 2$ in the top-$\ell$ setting and every $L \ge 2$ in the up-to-$L$ setting. Conversely, for the positional scoring rule, we show that the winner can be determined in polynomial time under both settings when the scoring vector does not distinguish the second to the $\ell$-th ($L$-th, respectively) rank. A comparison between our results and the results in full rankings in previous works is in Table~\ref{tbl:res}.

We define the problem in the way of determining the winner with absent votes rather than following the convention of coalitional manipulation because the objects of the two problems are opposite. For the winner determination, we hope the problem is easy so that we have efficient ways to provide possible outcomes to the public. In the coalitional manipulation problem, on the other hand, hardness results are more welcomed because they prevent manipulators from corrupting the elections easily~\citep{Faliszewski10:Using}. Our motivation and inspiration align with the winner determination problem.

\section{Related Works}
As mentioned in the introduction, winner determination with absent votes has been extensively explored by the computational social choice community from the perspective of coalitional manipulation. \citet{Gibbard73:Manipulation} and \citet{Satterthwaite75:Strategy} show that all reasonable voting rules suffer from manipulation under some situations. The earliest studies on the complexity of manipulation problems \citep{Bartholdi89:Computational,Bartholdi91:Single} show that determining even a single manipulator would succeed is NP-hard under some voting rules when the number of candidates is unbounded. A large literature follows the path and develops theoretical results of coalitional manipulation under a spectrum of weighted~\citep{Conitzer02:Complexity,Conitzer07:When,Hemaspaandra07:Dichotomy,Zuckerman09:Algorithms,Xia10:Scheduling} and unweighted~\citep{Xia09:Complexity,Betzler11:Unweighted,Davies11:Complexity,Narodytska11:Manipulation,Faliszewski10:Manipulation} voting rules. However, most of the previous work focuses on featuring full rankings. \citet{menon2015complexity} studies the complexity of weighted coalitional manipulation when top-truncated votes are allowed. However, the incorporation of the weights prevents its results from extending to the unweighted version. 

A closely related problem to the winner with absent votes and coalitional manipulation is the {\em possible winner} problem. The problem takes a set of candidates and a profile of partial orders on the candidates and asks if there is a full-order profile that extends the partial orders and makes a certain candidate a winner. A coalitional manipulation instance can be seen as a possible winner instance where a portion of the profile is full orders and the rest is empty votes. When the number of candidates is bounded, the possible winner problem can be solved in polynomial time in unweighted votes and NP-complete in weighted votes under multiple rules~\citep{Conitzer07:When,Pini11:Incompleteness,Walsh07:Uncertainty}. When the number of candidates is unbounded, the problem is P in the Condorcet rule~\citep{Konczak05:Voting} yet NP-complete in a large variety of other rules~\citep{Bartholdi91:Single,Xia08:Determining,Betzler10:Towards}.

Recent work has also looked at the intersection of voting theory with regulatory frameworks in the context of ranked-choice voting elections. In particular, there has been an interest in defining and computing the margin of victory (MoV), an important robustness measure of elections in Australia, where small margins would trigger elections audits~\cite{blom2016efficient,magrino2011computing}, or
potentially result in shifts in the balance of power~\cite{blom2020power,blom2020lostballot}, and in election manipulation~\cite{blom2019election} in ranked-choice voting settings.

\section{Preliminaries}
Let $M$ be the set of {\em candidates} (or {\em alternatives}). Let $m = |M|$ denote the number of candidates. Given a positive integer $\ell$, a top $\ell$ ranking $R$ is a ranking on a $\ell$-subset of $M$, where all the unranked candidates are tied and ranked lower than the ranked candidates.  
Let $\mathcal{L}_{\ell}(M)$ denote the set of all top $\ell$ rankings (or linear orders) on $M$. 

There are in total $\vt + \un$ voters in the vote, where $\vt$ is the number of voters whose votes are known, and $\un$ is the number of voters whose votes are absent. In the {\em top-$\ell$ setting}, each voter casts a top $\ell$ ranking $R\in \mathcal{L}_{\ell}(M)$ to represent their preference, where $a\succ_{R} b$ means the voter prefers $a$ to $b$. In the {\em up-to-$L$ setting}, each agent cast a ranking $R \in (\bigcup_{i=1}^{\ell} \mathcal{L}_{i}(M))$.  The vector of all voters' votes is called a {\em profile}. 
Let $P$ denote the profile of known votes and $P'$ denote the profile of absent votes.

Given a profile $P$ and two alternatives $a$ and $b$, let $P[a\succ b]$ denotes the votes in $P$ that prefer $a$ to $b$. The weighted majority graph (WMG) of $P$ is a graph whose vertices are votes in $P$ and weights on $a \to b$ is $ \omega_P(a\to b) = P[a\succ b] - P[b\succ a]$. 

\begin{ex}
\label{ex:votes}
    Let $M = \{1, 2, 3, 4\}$, $n = 4$, and $t = 2$.

    Let $P_1$ be a profile of 4 votes under the top-$2$ setting. $P_1$ contains two votes for $[3\succ 1]$, one vote for $[1\succ 4]$, and one vote for $[2\succ 1]$. 

    Let $P_2$ be a profile of 4 votes under the up-to-$3$ setting. $P_1$ contain two votes for $[1\succ 3]$, one vote for $[2\succ 1\succ 4]$, and one vote for $[3]$. 

    The weighted majority graph of $P_1$ and $P_2$ is in Figure~\ref{fig:ex_votes}.

    \begin{figure}
        \centering
        \includegraphics[width=0.9\columnwidth]{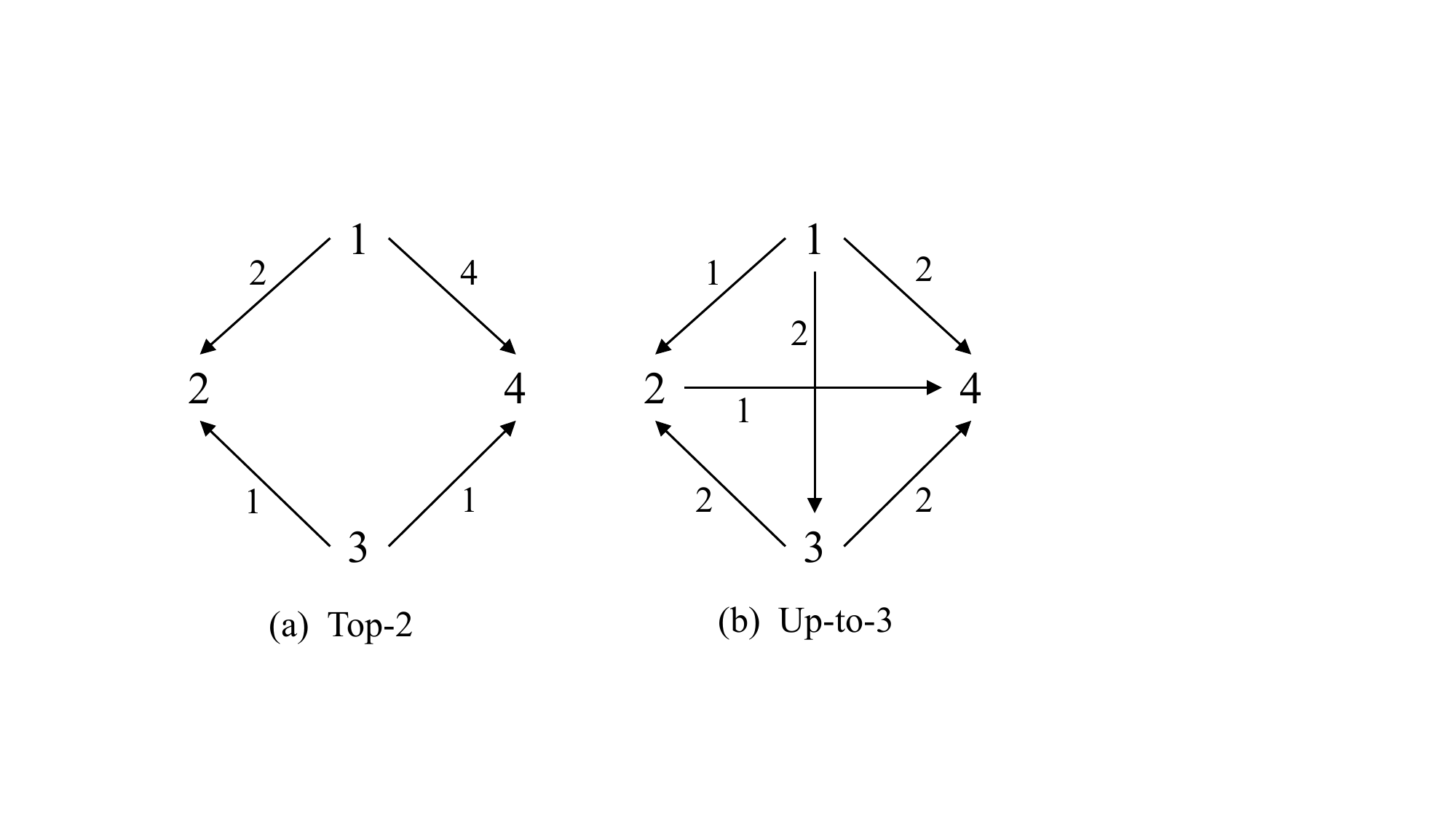}
        \caption{Weighted majority graph of instances in Example~\ref{ex:votes}.}
        \label{fig:ex_votes}
    \end{figure}
\end{ex}

\subsection{Voting Rules}
A (resolute) voting rule takes a voting profile as input and outputs a unique candidate as the winner. In the top-$\ell$ setting, a voting rule $r_{\ell}: (\mathcal{L}_{\ell}(M))^* \to M$, and in the up-to-$L$ setting, a voting rule $\overline{r}_L: (\bigcup_{i=1}^{L} \mathcal{L}_{i}(M))^* \to M$. A voting rule is {\em anonymous} if the winner is insensitive to the identities of agents.

We focus on the variation of the following common voting rules for the top $\ell$ or up to $L$ rankings. For a voting rule $r$. we use $r_{\ell}$ to denote its variation in top-$\ell$ setting and $\overline{r}_L$ to denote its variation in up-to-$L$ setting.

 The {\em single transferable voting} (STV) elects the winner in at most $m-1$ rounds. In each round, each vote contributes 1 score to its most preferred candidate that has not been eliminated, and the candidate with the lowest score is eliminated in that round. A tie-breaking mechanism is applied to select a single loser when necessary. If all candidates ranked in a vote are eliminated, that vote does not contribute to any candidate. The candidate remaining to the last becomes the winner. 

The {\em Copeland} rule is parametrized by a real number $0\le \alpha \le 1$, denoted by $\cpd^{\alpha}$. Given a profile $P$, a candidate $a$ gains 1 score for every other candidate $b$ it beats in the head-to-head competition (the weight on edge $a \to b$ is positive in the WMG) and $\alpha$ score when there is a tie. The candidate with the highest Copeland score becomes the winner, and a tie-breaking mechanism is applied to select a single winner if necessary. 

 In the {\em Maximin} rule, the {\em min-score} of a candidate $a$ is the lowest weight of its out-going edges in the weighted majority graph, i.e. $\min_{b\in M\setminus \{a\}} \omega_P(a\to b)$. The candidate with the highest min-score becomes the winner, and a tie-breaking mechanism is applied to select a single winner if necessary.
    
 We define the {\em (integer) positional scoring rule} in two settings respectively. 

    In the top-$\ell$ setting, a positional scoring rule is characterized by an $\ell$-dimension vector $\vs_{\ell} = (\scr_1, \scr_2,\cdots, \scr_{\ell})$ with $\scr_1 \ge \scr_2 \ge \cdots \ge \scr_{\ell} \ge 0$. Given a top-$\ell$ vote $V_i$ and a candidate $\tg$, let $s(V_i, \tg) = \scr_j$ where $j$ is the rank of $\tg$ in $V_i$ or $s(V_i, \tg) = 0$ if $\tg$ is not ranked in $i$. For any profile $P$, let $s(P, \tg) = \sum_{V_i\in P} s(V_i, \tg)$. The candidate $\tg$ maximizing $s(P, \tg)$ becomes the winner, and a tie-breaking mechanism is applied to select a single winner if necessary.

    In the up-to-$L$ setting, we follow the scheme from~\citet{Narodytska14:Computational} to deal with top-truncated rankings. A positional scoring rule is characterized by an $L$-dimensional vector, $\vs_{L} = (\scr_1, \scr_2,\cdots, \scr_{L})$ with $\scr_1 \ge \scr_2 \ge \cdots \ge \scr_{L} \ge 0$, and a {\em rounding indicator}, denoted by $\uparrow$ or $\downarrow$. In an {\em up-rounding} scoring rule $\vs_{L\uparrow} $, a candidate $\tg$ ranked $j$-th in an $\ell$-ranking vote $V_i$ has a score $s(V_i, \tg) = \scr_j$. And in a {\em down-rounding} scoring rule $\vs_{L\downarrow} $, a candidate $\tg$ ranked $j$-th in an $\ell$-ranking vote $V_i$ has a score $s(V_i, \tg) = \scr_{L - \ell + j}$. In both cases, an agent not ranked in a vote gets a score of 0. For any profile $P$, let $s(P, a) = \sum_{V_i\in P} s(V_i, \tg)$. The candidate $\tg$ maximizing $s(P, \tg)$ becomes the winner, and a tie-breaking mechanism is applied to select a single winner if necessary.

\begin{ex}
    \label{ex:STV}
    We calculate the STV winner for the top-$\ell$ instance in Example~\ref{ex:votes}. 
    In the first round, candidate $3$ gets two votes, $1$ and $2$ get one vote each, and $4$ gets no votes. Therefore, $4$ is eliminated. 

    In the second round, candidate $3$ gets two votes, and $1$ and $2$ get one vote each. Suppose we use a lexicographic tie-breaking mechanism. Therefore, $2$ is eliminated. 

    In the third round, the vote $[2\succ 1]$ contributes to candidate $1$. Therefore, both $1$ and $3$ get two votes. Then $3$ is eliminated, and $1$ becomes the winner. 
\end{ex}

\begin{ex}
    \label{ex:PSR}
    We calculate the winner for the up-to-$L$ instance in Example~\ref{ex:votes} under up-rounding and down-rounding positional scoring rules respectively. The scoring vector $\vs_L = (8, 2, 1)$. 

    In the up-rounding rule, the score of $1$ is $8\times 2 + 2 = 18$, of $2$ is $8$. of $3$ is $8 + 2 = 10$, and of $4$ is 1. Therefore, $1$ is the winner. 

    In the down-rounding rule, the score of $1$ is $2\times 2 + 2 = 6$, of $2$ is $8$, of $3$ is $1\times 2 + 8 =10$, and for $4$ is $1$. Therefore, $3$ is the winner.  
\end{ex}

\subsection{Computational Problems}
We consider the following computational question: when $\ell$ (or $L$, respectively) is a constant, given a set of candidates $M$, a set of known profiles $P$ of top-$\ell$ (up-to-$L$, respectively) votes, the number of absent votes $t$, and a targeted candidate $\tg$, is there a profile $P'$ of $t$ votes that makes $\tg$ the winner? 

We first define the question for the top-$\ell$ setting. For each constant $\ell \ge 1$ and a voting rule for top-$\ell$ rankings, we define the following problem. 

\begin{dfn}[\wpp-$r_{\ell}$]
    \label{dfn:wppt}
    \textbf{Input}: a set of candidates $M$, a profile $P$ of known top-$\ell$ ranking votes, the number of absent votes $\un$, and a candidate $\tg$. 
    
    \textbf{Determine}: If there exists a profile $P'$ of $\un$ top-$\ell$ ranking votes such that $r_{\ell}(P\cup P') = \tg$.
\end{dfn}

We also consider two variations of the \wpp{} problem with fixed parameters. In \wpp{} with fixed $m$, the number of the candidates is removed from the input and becomes a pre-determined constant. In \wpp{} with fixed $\un$, the quantity of the absent votes becomes a pre-determined constant. 

    
    

For the up-to-$L$ setting, we follow a similar definition. For each constant $L \ge 1$ and a voting rule $\overline{r}_L$, we define the following problems. 

\begin{dfn}[\wpp-$\overline{r}_L$]
    \label{dfn:wppu}
    \textbf{Input}: a set of candidates $M$, a profile $P$ of known up-to-$L$ ranking votes, the number of absent votes $\un$, and a candidate $\tg$. 
    
    \textbf{Determine}: If there exists a profile $P'$ of $\un$ up-to-$L$ ranking votes ranking votes such that $\overline{r}_L(P\cup P') = \tg$.
\end{dfn}

Similarly, we also consider \wpp{} with fixed $m$ and \wpp{} with fixed $\un$ in the up-to-$L$ setting. 

For both problems, we focus on $\ell \ge 2$ and $L \ge 2$ cases, as when $\ell = 1$ or $L = 1$, most common voting rules reduce into plurality, and the WAV problem can be computed in polynomial time. 

    


    

\section{Fixed $m$ and fixed $t$}
We first show the easiness result for the variation of a fixed number of candidates and a fixed number of absent votes under both settings. 

\begin{prop}
    \label{prop:fixt}
    For any $\ell \ge 2$ and any anonymous voting rule $r_{\ell}$, both \wpp-$r_{\ell}$ with any fixed $m \ge 2$ and \wpp-$r_{\ell}$ with any fixed $\un$ can be solved in polynomial time if the winner of $r_{\ell}$ can be computed in polynomial time. 
\end{prop}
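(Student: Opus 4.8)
The plan is to reduce both variants to a bounded enumeration over \emph{anonymous} absent profiles, using the fact that an anonymous rule depends only on the multiset of absent ballots, not on which absent voter casts which ballot. First I would count the distinct top-$\ell$ ballots: a top-$\ell$ ranking is an ordered choice of $\ell$ distinct candidates, so there are exactly $K = m!/(m-\ell)! = m(m-1)\cdots(m-\ell+1)$ of them. By anonymity, a profile $P'$ of $\un$ absent votes is fully described by a vector $(x_1,\dots,x_K)$ of nonnegative integers with $\sum_{j=1}^{K} x_j = \un$, where $x_j$ is the number of absent voters casting ballot type $j$. The algorithm enumerates every such vector, forms $P \cup P'$, computes $r_{\ell}(P\cup P')$ in polynomial time by hypothesis, and accepts iff some choice elects the target $\tg$.

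For fixed $m$, the count $K$ is a constant, so the number of admissible vectors is $\binom{\un + K - 1}{K-1} = O(\un^{\,K-1})$, which is polynomial in $\un$ and hence in the input size. Each of these polynomially many candidate profiles is tested by one call to the assumed polynomial winner routine on a profile of $\vt + \un$ votes, so the whole procedure is polynomial. The crucial point is that anonymity collapses the $K^{\un}$ \emph{labelled} assignments of ballots to absent voters --- a quantity exponential in $\un$ --- down to only $O(\un^{\,K-1})$ distinct anonymous profiles.

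For fixed $\un$, the roles reverse: $\un$ is a constant while $m$ may grow, and now $K = m!/(m-\ell)! \le m^{\ell}$ is polynomial in $m$ because $\ell$ is fixed. The number of size-$\un$ multisets over $K$ ballot types is $\binom{K + \un - 1}{\un} = O(K^{\un}) = O(m^{\ell \un})$, which is polynomial in $m$ since both $\ell$ and $\un$ are constants. The same enumerate-and-test procedure therefore again runs in polynomial time, and the identical argument handles the up-to-$L$ setting after replacing $K$ by $\sum_{i=1}^{\ell} m!/(m-i)!$, which is still constant for fixed $m$ and $O(m^{\ell})$ for fixed $\un$.

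The main point to get right is the exact role of anonymity together with the encoding of $\un$. Without anonymity the fixed-$m$ case would force us to examine the $K^{\un}$ labelled profiles, exponential in $\un$; it is precisely anonymity that makes the relevant count polynomial, so I would state and use this reduction explicitly. I would also make clear that the number of absent votes $\un$ (like the number of known votes $\vt$) contributes to the input size, so that $O(\un^{\,K-1})$ genuinely counts as polynomial time, and verify that merging $P$ with each candidate $P'$ and invoking the polynomial winner computation keeps every individual test polynomial; the remaining bookkeeping (bounding $K$, the binomial estimates, and the total running time as a product of the number of profiles and the cost per winner computation) is routine.
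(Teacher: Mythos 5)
Your proposal is correct and takes essentially the same approach as the paper's proof sketch: enumerate all anonymous absent profiles (polynomially many for fixed $m$ by anonymity via a stars-and-bars count, and polynomially many for fixed $\un$ since there are only $O(m^{\ell})$ ballot types) and test each with the assumed polynomial-time winner computation. The only cosmetic difference is that the paper enumerates labelled profiles in the fixed-$\un$ case, giving the same $O(m^{\ell \un})$ bound you obtain with multisets.
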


\begin{proof}[Proof Sketch]
    \noindent\textbf{Fixed $m$.} We enumerate all possible {\em anonymous} profiles $P'$ of $t$ votes. There are $\frac{m!}{(m -\ell) !} = O(m^{\ell}) = O(1)$ different top-$\ell$ rankings. The numbers of these rankings sum up to $t$. Therefore, there will be at most $O(t^{m^{\ell}}) = poly(t)$ many anonymous profiles. 
    
    \noindent\textbf{Fixed $\un$.} We enumerate all possible profiles $P'$ of $t$ votes. For each vote, there are $O(m^{\ell})$ different top-$\ell$ rankings. Therefore, the number of all possible $P'$ is at most $O(m^{t\ell})$. 
\end{proof}

\begin{prop}
\label{prop:fixu}
    For any $L \ge 2$ and any anonymous voting rule $\overline{r}_{L}$, both \wpp-$\overline{r}_{L}$ with any fixed $m \ge 2$ and \wpp-$\overline{r}_{L}$ with any fixed $\un$ can be solved in polynomial time if the winner of $\overline{r}_L$ can be computed in polynomial time. 
\end{prop}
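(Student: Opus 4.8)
The plan is to mirror the counting argument already used for Proposition~\ref{prop:fixt}, adapting it to the up-to-$L$ setting. The only structural difference between the two settings is the space of admissible ballots: in the top-$\ell$ setting a vote is an ordered $\ell$-subset of $M$, whereas in the up-to-$L$ setting a vote is an ordered $i$-subset for some $1 \le i \le L$. So the first step is to bound the number of distinct admissible ballots. Since $\sum_{i=1}^{L} \frac{m!}{(m-i)!} = \sum_{i=1}^{L} O(m^i) = O(m^L)$, there are still only polynomially many (in $m$) distinct up-to-$L$ rankings, and when $L$ is a constant this is $O(m^L)$.

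For the fixed $m$ case, I would enumerate all anonymous profiles $P'$ of the $t$ absent votes. An anonymous profile is determined by how many of the $t$ votes are cast as each of the $D = O(m^L) = O(1)$ distinct ballot types; this is a composition of $t$ into $D$ nonnegative parts, of which there are $\binom{t+D-1}{D-1} = O(t^{D-1}) = \mathrm{poly}(t)$ many. For each such $P'$, form $P \cup P'$ and check in polynomial time whether $\overline{r}_L(P \cup P') = c$, using the assumption that the winner under $\overline{r}_L$ is polynomial-time computable; anonymity guarantees that testing one representative per anonymous profile suffices. Since the outer loop has polynomially many iterations and each iteration is polynomial, the whole procedure runs in polynomial time.

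For the fixed $t$ case, I would instead enumerate all (non-anonymous) profiles $P'$ of the $t$ votes directly: each of the $t$ absent votes is one of $D = O(m^L)$ ballot types, so there are at most $O(m^{tL})$ choices of $P'$, which is polynomial in $m$ because both $t$ and $L$ are constants. Again, for each candidate profile $P'$ we compute $\overline{r}_L(P \cup P')$ in polynomial time and accept if some profile makes $c$ the winner.

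There is essentially no hard part here: the argument is a direct transcription of the proof of Proposition~\ref{prop:fixt}, the single observation being that replacing $\frac{m!}{(m-\ell)!}$ by $\sum_{i=1}^{L}\frac{m!}{(m-i)!}$ still yields $O(m^L) = O(1)$ ballot types, so both counting bounds ($O(t^{O(1)})$ anonymous profiles for fixed $m$, and $O(m^{tL})$ profiles for fixed $t$) remain polynomial. The only point requiring mild care is invoking anonymity correctly in the fixed-$m$ case to justify enumerating anonymous profiles rather than all ordered profiles, exactly as in the proof sketch of Proposition~\ref{prop:fixt}; for this reason I would present the result as sharing a proof with that proposition.
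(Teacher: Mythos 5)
Your proposal is correct and matches the paper's approach exactly: the paper proves Proposition~\ref{prop:fixu} by the same enumeration argument as Proposition~\ref{prop:fixt}, with the single adjustment that the number of ballot types becomes $\sum_{i=1}^{L}\frac{m!}{(m-i)!} = O(m^L)$, which is precisely your observation. Both your counting bounds (polynomially many anonymous profiles for fixed $m$, and $O(m^{tL})$ profiles for fixed $t$) and your use of anonymity coincide with the paper's argument.
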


Proposition~\ref{prop:fixt} and~\ref{prop:fixu} directly imply that previous results for full-rankings votes do not apply to our top-$\ell$ and up-to-$L$ settings, as their hardness results hold even for fixed $m$ or fixed $t$ instances.
In the rest of the paper, we focus on the problem with variables $t$ and $m$ under common voting rules.







\section{Single Transferable Vote} 
For STV, we show that \wpp-${\stv_{\ell}}$ is NP-complete for all $\ell \ge 2$. 

\begin{thm}
\label{thm:STV}
For every constant $\ell\ge 2$, \wpp-${\stv_{\ell}}$ is NP-complete.
\end{thm}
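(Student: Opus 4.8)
The plan is to prove both membership in NP and NP-hardness. Membership is the easy direction. A certificate is the absent profile $P'$, which I would record \emph{anonymously} as the vector of multiplicities of each of the $O(\cd^{\ell})$ distinct top-$\ell$ rankings; since $\ell$ is a constant this description is of polynomial size even when $\un$ is encoded in binary. Given $P\cup P'$, one evaluates $\stv_{\ell}$ directly on the anonymous counts in polynomial time --- each of the at most $\cd-1$ rounds sums the first-place multiplicities of the surviving candidates, eliminates the one with the smallest count (breaking ties by the fixed rule), and redistributes --- and accepts iff the winner equals $\tg$. Hence $\wpp$-$\stv_{\ell}\in\text{NP}$.

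For the hardness direction I would first settle the base case $\ell=2$ by a reduction from Exact Cover by 3-Sets (X3C): given a universe $U$ of $3q$ elements and a family of $3$-element subsets, decide whether $q$ of the subsets partition $U$. The idea is to introduce the target $\tg$, one \emph{element candidate} per element of $U$, a designated rival whose defeat is necessary for $\tg$ to win, and a small number of auxiliary threshold candidates, and then to calibrate the known profile $P$ so that in the absence of the $\un$ extra votes the election carries the rival (not $\tg$) to the final round. The operative feature of a top-$2$ ballot $[x\succ y]$ is its two-stage behaviour: it contributes one unit to $x$ while $x$ survives, and one unit to $y$ only after $x$ is eliminated. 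I would use this primary/backup structure, together with a budget $\un$ tied to the cover size $q$, to encode the choice of sets.

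The correspondence I aim to establish is: an exact cover of size $q$ exists iff there is a distribution of the $\un$ top-$2$ absent ballots that eliminates exactly the right element candidates in exactly the right rounds to remove the rival before the last round and leave $\tg$ standing. The budget must be set so tightly that the absent votes can afford to ``cover'' each element at most once, so that any successful manipulation forces a genuine partition of $U$, while any genuine exact cover can be realised by the ballots. The hard part will be exactly this calibration: because STV is sequential, perturbing one round's scores cascades into every later round, so the base scores in $P$ must be chosen so that (i) there is no slack that lets the manipulators succeed without solving X3C and (ii) there is no rigidity that blocks a legitimate cover from working. The top-$\ell$ restriction is double-edged here and must be exploited in both directions: it denies the manipulators the freedom to reshuffle a long tail (which is what forces the covering structure), yet it still leaves enough control to realise any valid cover.

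Finally, to lift the result from $\ell=2$ to every constant $\ell\ge 2$, I would pad every known and absent ballot with $\ell-2$ \emph{filler} candidates placed in positions $3,\dots,\ell$, with the two ``real'' candidates kept in positions $1$ and $2$. Giving the fillers no first-place support (and, if needed, a short domination argument showing the manipulators never benefit from promoting a filler) makes them the uniquely lowest-scoring candidates, so they are eliminated in the first $\ell-2$ rounds before any ballot exhausts its two real choices; consequently no ballot ever transfers to a position-$\ge 3$ filler, and the election behaves exactly as in the top-$2$ instance. Combined with NP membership, this yields NP-completeness of $\wpp$-$\stv_{\ell}$ for every constant $\ell\ge 2$.
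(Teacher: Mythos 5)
Your NP-membership argument is fine (indeed more careful than the paper's one-liner). The hardness half, however, is a plan rather than a proof, and it has two genuine gaps. First, the entire substance of such a reduction lies in the explicit construction you defer: the concrete multiset of known votes, with multiplicities calibrated so that (i) only $\tg$ or the designated rival $\wn$ can ever win, (ii) the paired candidates are eliminated in a prescribed order in the first $q$ rounds, and (iii) the budget of $t=q/3$ absent ballots is exactly enough to realize a cover and no more. The paper's proof consists almost entirely of this calibration --- blocks of $12q$, $12q-1$, $10q+2q/3$, $12q-2$, $6q+4i-6$, $6q+4i-2$ votes over candidates $\{\wn,\tg\}\cup\{\da_0,\dots,\da_q\}\cup\{\ba_i,\bb_i\}$, built from the \emph{restricted} version of X3C (each element in exactly three sets, which is what makes the score bound on each $\da_i$ work) --- plus a two-step counting argument showing each absent ballot can save at most one $\bb_i$, so any successful $P'$ encodes an exact cover. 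By conceding that ``the hard part will be exactly this calibration,'' you are conceding that the proof is missing.

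Second, your lifting step from $\ell=2$ to general $\ell$ is unsound as described. Padding the \emph{known} ballots with fillers, plus an argument that manipulators never rank a filler, does not show the padded election ``behaves exactly as in the top-2 instance'': in the top-$\ell$ problem the \emph{absent} ballots may rank $\ell$ real candidates, so a single absent ballot can transfer through several real candidates across rounds --- power that no top-2 ballot has. The forward direction (cover $\Rightarrow$ manipulation) survives padding, but the soundness direction (manipulation $\Rightarrow$ cover) must be re-proven against these longer ballots, and that is not automatic; it depends on the details of the $\ell=2$ construction you have not given. This is precisely why the paper goes in the opposite direction: it builds the instance for $\ell\ge 4$, proves soundness by a counting argument that never restricts the length of absent ballots (each absent ballot is attached to at most one surviving candidate at any time), and then obtains $\ell=2,3$ by \emph{truncating the known votes}, which only deletes transfers to $\wn$ or $\tg$ that the score bounds already absorb. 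To rescue your bottom-up route you would have to show that the soundness argument of your (yet-to-be-specified) $\ell=2$ construction remains valid when absent ballots have $\ell$ real slots; the filler-domination argument you propose does not address this.
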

\begin{proof}
The membership of NP is straightforward by running the vote and checking the winner. The hardness is proved by a reduction from {\em restricted exact three-cover (RXC3)} that is similar to the reduction in the hardness proof for the manipulation problem under STV~\citep{Bartholdi91:Single}. We first show the case of $\ell\ge 4$, and describe on how to modify to $\ell=2$ and $\ell=3$ cases. 
\begin{dfn}[RXC3~\cite{Gonzalez1985:Clustering}]
    \label{dfn:rxc} {\bf Input}: (1) a set of $q$-elements, denoted by $\res = \{\re_1, \re_2,\ldots, \re_q\}$, where $q$ is divisible by $3$; (2) $q$ sets $\rss = \{\rs_1, \rs_2,\ldots, \rs_q\}$ such that for every $j\le q$, $\rs_j\subseteq \res$ and $|\rs_j|=3$. For every $i\le q$, $\re_i$ is in exactly three sets in $\rss$. Without loss of generality, we assume that $q$ is an even number. If $q$ is odd, then we use an instance with duplicate $\res$ and $\rss$. 
    
    {\bf Determine}: if there exists a subset $\rss^* \subseteq \rss$ such that for every $\re_i \in \res$, there exists exactly one $\rs_j \in \rss^*$ such that $\re_i \in \rs_j$. We call $\rss^*$ an {\em exact 3-cover} of $\res$. Note that if such $\rss^*$, there must be $|\rss^*| = \frac{q}{3}$. 
\end{dfn}

For any RXC3 instance $(\res, \rss)$, where $\res = \{\re_1, \re_2,\ldots, \re_q\}$ and $\rss = \{\rs_1, \rs_2,\ldots, \rs_q\}$. We construct the following \wpp-${\stv_{\ell}}$ instance. 

{\bf Candidates:} there are $3q+3$ alternatives $\{\wn,\tg\}\cup \{\da_0,\da_1,\ldots,\da_q\}\cup \{\ba_1,\bb_1,\ldots,\ba_q,\bb_q\}$. We assume that $\da_0\succ\da_1\succ \da_2\succ\cdots\succ \da_q \succ \ba_1\succ\bb_1\succ\ba_2\succ\bb_2\succ\cdots\succ\ba_q\succ\bb_q$ in tie-breaking.

{\bf Absent Votes:} $t = q/3$. 

{\bf Known votes:} The profile $\prfn$  consists of the following votes, where only the top preferences are specified. We'll show that either $\wn$ or $\tg$ is the winner, therefore, the votes can be filled to top-$\ell$ ranking arbitrarily without affecting the proof. 
\begin{itemize}
\item {\boldmath $\prf_1$}: There are $12q$ votes of $[\tg\succ \wn]$.
\item {\boldmath $\prf_2$}: There are $12q-1$ votes of $[\wn\succ \tg]$.
\item {\boldmath $\prf_3$}: There are $10q+ 2q/3$ votes of $[\da_0\succ \wn\succ \tg]$.
\item {\boldmath $\prf_4$}: For every $i\in \{1,\ldots,q\}$, there are  $12q-2$ votes of $[\da_i\succ \wn\succ \tg]$.
\item {\boldmath $\prf_5^1$}: For every $i\in \{1,\ldots,q\}$, there are $6q+4i-6$ votes of $[\ba_i\succ\bb_i\succ \wn\succ \tg]$; and {\boldmath $\prf_5^2$}: for every $i\in \{1,\ldots,q\}$ and  every $j$ such that $\re_j\in \rs_i$, there are two votes of $[\ba_i\succ \da_j\succ \wn\succ\tg]$. 
\item {\boldmath $\prf_6^1$}:  For every $i\in \{1,\ldots,q\}$, there are $6q+4i-2$ votes of $[\bb_i\succ \ba_i\succ \wn\succ \tg]$; and {\boldmath $\prf_6^2$}: for every $i\in \{1,\ldots,q\}$, there are two votes of $[\bb_i\succ \da_0\succ \wn\succ \tg]$.
\end{itemize}

First, we show that no matter what rankings $\prft$ contains, {\bf the winner will be either $\tg$ or $\wn$}. This is because, once one of $\tg$ or $\wn$ is eliminated in some round, the remaining other will get its votes of $\prf_1$ or $\prf_2$: if $c$ is eliminated first, $\wn$ will get $12q$ votes from $\prf_1$; and if $\wn$ is eliminated first $\wn$ will get $12q-1$ votes from $\prf_2$. Therefore, the remaining one will have a score of at least $24q-1$. On the other hand, all other alternatives cannot have such a high score:
\begin{itemize}
    \item $\ba_i$ and $\bb_i$ only gets votes from each other and from $\prft$, and cannot have score more than $12q + 8i+ q/3 \le 20q + q/3$. 
    \item $\da_0$ can get two votes from each $\bb_i$ and votes from $\prft$, and cannot have score more than $10q + 2/3q + 2q + q/3 = 13q$. 
    \item $\da_i$ can get votes from $\ba_i$ and from $\prft$. Since each $\re_j$ is in exactly three $\rs_i$, $\da_i$ can get two votes from exactly three $\ba_i$. Therefore, $\da_i$ cannot have score more than $12q - 2 + 6 + q/3 = 12q + q/3 +4$. 
\end{itemize}
Therefore, no other alternative can exceed the score of $24q-1$ at any time, and cannot be the winner. 

Now we show that \wpp-${\stv_{\ell}}$ is a YES instance if and only if RXC3 is a YES instance. 

{\bf Suppose RXC3 is a YES instance.} $\rss^*$ be an exact 3-cover of $X$, and $I = \{i \mid \rs_i \in \rss^*\}$ be the index set of $\rss^*$. Then we construct $\prft$ as follows: for each $i\in \idx$, there is one vote of $[\bb_i\succ\ba_i\succ \tg\succ \wn]$. In the first $q$ round of voting, for each $i\le q$, if $i\in\idx$, $\ba_i$ is eliminated; otherwise $\bb_i$ is eliminated. If $\ba_i$ is eliminated, $6q+4i-6$ of its vote transfer to $\bb_i$, and for every $j$ such that $\re_j\in\rs_i$, $\da_j$ gets two of its vote. If $\bb_i$ is eliminated, $6q+4i-2$ of its vote transfer to $\ba_i$, and two transfer to $\da_0$. Therefore, in the beginning of $q+1$ round, the plurality scores of the remaining alternatives are as in the following table.
\renewcommand{\arraystretch}{1.5}
\begin{center}
\begin{tabular}{|@{\ }c|@{\ }c@{\ } @{\ }|@{\ }c@{\ }|@{\ }c@{\ }|@{\ }c@{\ }|@{\ }c@{\ }|}
\hline  
Rd. & $\wn$ & $\tg$ & $\ba_i$ or $\bb_i$& $\da_0$& $ \da_i$\\
\hline $q+1$& $12q-1$ & $12q$ &  \begin{tabular}{@{}c@{}}$12q+8i-1$ or  \\$12q+8i-5$\end{tabular}& $12q $ & $12q $\\
\hline
\end{tabular}
\end{center}
\renewcommand{\arraystretch}{1}
Therefore, $\wn$ is eliminated in round $q+1$, and $\tg$ will become the winner eventually. 

{\bf Suppose \wpp-${\stv_{\ell}}$ is a YES instance.} We prove that RXC3 is a YES instance in the following steps. 

{\bf Step 1.} In the first $q$ rounds, exactly one of $\ba_i$ and $\bb_i$ is eliminated for all $i\le q$. Firstly, the initial score of $\ba_i$ and $\bb_i$ is at most $6q+4i+q/3 \le 10q+q/3$, while the score of other alternatives is at least $10q + 2/3q$. On the other hand, once one of $\ba_i$ and $\bb_i$ is eliminated, the other gets the transferred votes and has a score of more than $12q$. Therefore, in the first $q$ round, in each round, either $\ba_i$ or $\bb_i$ is eliminated for a distinct $i$. 

{\bf Step 2.} Let $\idx = \{i:\ \ba_i\text{ is eliminated in the first $q$ rounds.}\}$. Then $\idx$ must be the index set of an RXC3 solution.  

Suppose it is not the case. Firstly, there must be $|\idx| \le q/3$. For each $i \in \idx$, $\bb_i$ needs at least one vote for $P'$ to win $\ba_i$ in the round that $\ba_i$ is eliminated. And once $\bb_i$ is not eliminated, this vote follows $\bb_i$ to the round $q+1$ and cannot contribute to another $\bb_i$'s winning. Therefore, since $t = q/3$, there are at most $q/3$ of $\bb_i$ that beats $\ba_i$. 

Now suppose $|\idx| = q/3$ and $\idx$ is not a solution. Then there exists some $\re_j \in \res$ that is not covered. In this case, for all $i$ such that $\re_j\in\rs_i$, $\bb_i$ is eliminated. Therefore, $\da_j$ does not get any vote transfer from the first $q$ rounds. On the other hand, all $q/3$ votes contribute to some $\bb_i$ in the $q+1$ round and cannot contribute to $\da_j$. Therefore, $\da_j$ has a score of $12q - 2$ in the $q+1$ round and is eliminated, and its votes are transferred to $\wn$. Then $\wn$ has a score of at least $24q -3$ which exceeds $\tg$ all the time. Therefore, $\tg$ cannot be the winner.

Finally, suppose $|\idx| < q/3$. Then there are at least $q - 3|\idx|$ of $\re_j$ not covered, and all the corresponding $\da_j$ do not get transferred in the first $q$ rounds. If they shall not be eliminated in the $q+1$ round, there needs to be at least 1 vote for each of them, which is $q - 3|\idx|$ votes from $P'$. However, since all $|\idx|$ votes contribute to some $\bb_i$, there are only $q/3 - |\idx|$ and cannot cover all $\da_j$. Therefore, one $\da_j$ is eliminated in this round, and its votes are transferred to $\wn$. Then $\wn$ has a score of at least $24q -3$ which exceeds $\tg$ all the time. Therefore, $\tg$ cannot be the winner.

Therefore, once \wpp-${\stv_{\ell}}$ is a YES instance, the index set of eliminated $\ba_i$ in the first $q$ rounds forms the index set of a solution to the RXC3. Therefore, RXC3 is a YES instance. 

\paragraph{$\ell=2$ and $\ell=3$} To prove the $\ell=2$ and $\ell=3$ case, all we need to do is to truncate the rankings in $\ell=4$ construction into top-2 or top-3 rankings respectively. The rest of the proof will remain the same (except for a few vote transition which does not affect the winner). The full proof is in the appendix.
\end{proof}

\begin{thm}
    \label{thm:stvu}
    For every constant $L \ge 2$, \wpp-$\overline{STV}_{L}$ is NP-complete. 
\end{thm}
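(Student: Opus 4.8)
The plan is to reuse the reduction from restricted exact three-cover (RXC3) built in the proof of Theorem~\ref{thm:STV}, exploiting the fact that STV ignores scoring positions and depends only on the order in which ranked candidates transfer: a ballot of any length $\ell' \le L$ behaves in $\overline{STV}_L$ exactly as a top-$\ell'$ ballot, so the up-to-$L$ model accommodates the construction with essentially no change. Membership in NP is immediate, since given a candidate absent profile $\prft$ of $\un$ up-to-$L$ ballots we simply run STV on $\prfn\cup\prft$ and check whether $\tg$ wins. For hardness I would take, from an RXC3 instance $(\res,\rss)$, the same $3q+3$ candidates, the same tie-breaking order, the same $\un=q/3$ absent votes, and the same known profile $\prf_1,\ldots,\prf_6$. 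When $L\ge 4$ every one of these ballots, as well as the length-$4$ certificate ballots $[\bb_i\succ\ba_i\succ\tg\succ\wn]$, has length at most $4\le L$ and is therefore a legal up-to-$L$ ballot, so no modification is needed; for $L=2$ and $L=3$ I would truncate each ballot to length $L$ exactly as in the $\ell=2,3$ cases of Theorem~\ref{thm:STV}.

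The completeness direction (RXC3 YES implies \wpp\ YES) is unchanged: for each $i$ in the index set $\idx$ of an exact cover I cast $[\bb_i\succ\ba_i\succ\tg\succ\wn]$ (truncated when $L<4$), which makes $\bb_i$ survive round $i$ precisely for the sets in the cover, so that every $\da_j$ receives its two transferred votes, $\wn$ is eliminated in round $q+1$, and $\tg$ wins. The soundness direction (\wpp\ YES implies RXC3 YES) is where the extra freedom of the up-to-$L$ model must be re-examined, and I expect this to be the main obstacle: unlike the top-$\ell$ setting, the adversary may now cast ballots of any length from $1$ to $L$, put any candidate on top, or leave $\tg$ or $\wn$ off the ballot entirely, and I must argue that none of this helps.

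I would show that every structural bound used in Steps~1 and~2 of the Theorem~\ref{thm:STV} proof is a counting argument over the $\un=q/3$ absent ballots that is insensitive to ballot length. Concretely: the dichotomy that the winner is $\tg$ or $\wn$ holds because any other candidate gains at most $q/3$ extra first-place votes from $\prft$ regardless of ballot lengths; the claim that exactly one of $\ba_i,\bb_i$ is eliminated in each of the first $q$ rounds rests only on the fixed initial scores and transfers in $\prfn$; and, since $\ba_i$ and $\bb_i$ start tied with the default eliminating $\bb_i$, making $i\in\idx$ forces at least one absent ballot to be topped by $\bb_i$ before round $i$, while every uncovered element $\re_j$ forces a distinct absent ballot topped by $\da_j$ to keep it alive through round $q+1$. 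Writing $k=|\idx|$, at most $3k$ elements are covered by transfers, so at least $q-3k$ need direct support, giving the unified budget inequality $k+(q-3k)\le q/3$, i.e.\ $k\ge q/3$; combined with $k\le q/3$ this yields $k=q/3$ with no uncovered element, which is exactly an exact three-cover. Any uncovered $\da_j$ instead drops to score $12q-2$, is eliminated in round $q+1$, and pushes $\wn$ past $\tg$, so $\tg$ cannot win unless $\idx$ is an exact cover. This establishes the reduction; the routine verification of the truncated $L=2,3$ constructions I would defer to the appendix.
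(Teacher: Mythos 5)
Your proposal is correct and takes essentially the same approach as the paper: the paper proves Theorem~\ref{thm:stvu} precisely by reusing the RXC3 reduction of Theorem~\ref{thm:STV} with $\ell$ replaced by $L$ (truncating to length $L$ when $L=2,3$). You in fact supply more detail than the paper's two-sentence justification, notably the explicit counting argument that ballots of length less than $L$ give the absent voters no additional power in the soundness direction, a point the paper leaves implicit.
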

The proof for the up-to-$L$ case follows the top-$\ell$ case by replacing all $\ell$ to $L$. The construction of the \wpp{} instance requires $P'$ to make use of all $L$ positions in every vote to make $\tg$ the winner. 

\section{Maximin}
For Maximin and Copeland, we leverage the following lemma to construct the instance in the reduction. 

\begin{lem}
    \label{lem:McGarvey}
    For any constant $\ell \ge 2$, an arbitrary set of candidates $M$ with $m \ge \ell$, and two arbitrary candidates $a, b \in M$, there exists a voting profile $P$ with $poly(m)$ of top-$\ell$ ranking votes, and the weighted majority graph of $P$ contains only one non-zero-weighted edge of $a\to b$ with weight 2. 
\end{lem}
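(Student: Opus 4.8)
The plan is to adapt McGarvey's classical construction to the top-$\ell$ setting. The essential new difficulty, absent in the full-ranking case, is that in a top-$\ell$ ballot every ranked candidate is automatically placed above every one of the $m-\ell$ unranked candidates. Thus the WMG contribution of a single top-$\ell$ ballot splits into two parts: the \emph{internal} comparisons among the $\ell$ ranked candidates, which depend on the internal order, and the \emph{ranked-beats-unranked} comparisons, which depend only on which set $S$ of $\ell$ candidates is ranked (each ranked candidate contributes $+1$ against each unranked one). Classical McGarvey only has to deal with the first part; here I must also neutralize the second.

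First I would eliminate the ranked-beats-unranked contributions by a symmetrization over ranked sets. If the multiset of ranked sets used in $P$ contains every $\ell$-subset of $M$ the same number of times, then for any ordered pair $(x,y)$ the number of ballots ranking $x$ but not $y$ equals the number ranking $y$ but not $x$ (via the bijection swapping $x$ and $y$ inside the universe), so the ranked-beats-unranked contributions to $\omega(x\to y)$ and $\omega(y\to x)$ cancel for every pair simultaneously. Concretely I will let every $\ell$-subset be ranked by exactly two ballots, which makes the set-multiplicities uniform regardless of the internal orders chosen next.

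It then remains to produce the edge $a\to b$ purely through internal comparisons, while keeping those set-multiplicities uniform. For one distinguished $\ell$-set $S_0$ containing $a$ and $b$ (which exists since $m\ge\ell$), I use the classical reversal pair $V_1 = a\succ b\succ s_3\succ\cdots\succ s_\ell$ and $V_2 = s_\ell\succ\cdots\succ s_3\succ a\succ b$, where $s_3,\dots,s_\ell$ enumerate $S_0\setminus\{a,b\}$; a direct check shows their combined internal contribution is $+2$ on $a\to b$ and $0$ on every other pair inside $S_0$. For every other $\ell$-subset $S$ I use an arbitrary linear order of $S$ together with its exact reverse, whose internal contributions cancel to $0$. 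Summing, the internal part of the WMG is exactly $2$ on $a\to b$ and $0$ elsewhere, the ranked-beats-unranked part vanishes by the symmetrization, and the total number of ballots is $2\binom{m}{\ell}=O(m^\ell)=poly(m)$ since $\ell$ is constant.

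The main obstacle, and the point the argument is designed around, is reconciling two competing requirements: the ranked-beats-unranked garbage must cancel globally and simultaneously for all pairs, yet the target edge $a\to b$ must receive weight exactly $2$ rather than some large multiple. Letting a single set $S_0$ carry the edge while all sets (including $S_0$) keep the same multiplicity resolves this tension. The boundary cases $\ell=2$ (where $S_0=\{a,b\}$ and $V_1=V_2=a\succ b$) and $m=\ell$ (where there are no unranked candidates and the construction collapses to classical McGarvey) should be verified separately, but they go through unchanged.
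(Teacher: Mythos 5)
Your proof is correct and follows essentially the same route as the paper's: a McGarvey-style symmetrization that zeroes out the weighted majority graph, followed by a single local perturbation (your $V_2$ is exactly the reversal of $V_1$ with the adjacent pair $a,b$ swapped back into order) that creates the lone weight-$2$ edge $a\to b$. The differences are only ones of economy and explicitness: the paper symmetrizes over all $\frac{m!}{(m-\ell)!}$ permutations of all $\ell$-subsets and appeals to symmetry in one line, whereas you use just two ballots (an order and its reversal) per $\ell$-subset, $2\binom{m}{\ell}$ votes in total, and spell out the cancellation of the ranked-versus-unranked comparisons that the paper leaves implicit.
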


Lemma~\ref{lem:McGarvey} enables us to add arbitrary edges with even number weights to a weighted majority graph in polynomial many votes. 

\begin{proof}
   Our construction of $P$ follows the spirit of McGarvey~\cite{McGarvey53:Theorem}. It takes two steps: 

   \noindent\textbf{Step 1:} We first construct a slightly different profile $P'$. For any $\ell$-subset $M_{\ell}$ of $M$, and any permutation $\sigma_{M_{\ell}}$ on $M_{\ell}$, $P'$ contains a vote for $[\sigma_{M_{\ell}}(1) \succ \sigma_{M_{\ell}}(2) \succ \cdots \succ \sigma_{M_{\ell}}(\ell)]$. The number of votes in $P'$ is $A_{\ell}^{m} = O(m^{\ell})$. Due to symmetricity, all the candidates are tied in $P'$, and the weights of all the edges are 0 in the WMG of $P'$. 

   \noindent\textbf{Step 2:} Pick one vote in $P'$ such that $b$ is ranked the top and $a$ is ranked the second. Then $P$ is constructed by swapping $a$ and $b$ in this vote while keeping all other votes unchanged in $P'$. Since the only change is the relative position between $a$ and $b$ in one vote, the WMG of $P$ contains only one edge which is $a\to b$ with weight 2. And $P$ also contains $O(m^{\ell})$ edges. 
\end{proof}

We show that for $\maximin_{\ell}$ it is NP-complete to determine whether a candidate is a possible winner for all $\ell\ge 2$. 

\begin{thm}
\label{thm:maximin}
    For all constant $\ell \ge 2$, \wpp-${\maximin_{\ell}}$ is NP-complete. 
\end{thm}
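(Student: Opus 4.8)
The plan is to prove NP-completeness of \wpp-$\maximin_{\ell}$ by reduction from RXC3, using Lemma~\ref{lem:McGarvey} as the main construction tool. Membership in NP is immediate: given a candidate profile $P'$, we can compute the WMG and the min-scores in polynomial time and verify that $\tg$ wins. For the hardness, given an RXC3 instance $(\res,\rss)$ with $|\res|=q$, I would build a candidate set containing the target $\tg$, one candidate $\re_i$ for each element of $\res$, one candidate $\rs_j$ for each set of $\rss$, and a constant number of auxiliary/padding candidates. The key insight exploited by Maximin is that a candidate's score is its \emph{weakest} outgoing edge, so to make $\tg$ the winner I want every non-target candidate to retain at least one very negative (or low) outgoing edge that the $t=q/3$ absent votes cannot repair, while $\tg$'s own weakest edge can be pushed high enough only when the absent votes correspond to a genuine exact cover.

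The core of the reduction is to use Lemma~\ref{lem:McGarvey} to install, in $\text{poly}(m)$ known top-$\ell$ votes, an arbitrary even-weighted background WMG that encodes the set-element incidence structure. Concretely, I would set up the pairwise margins so that (i) $\tg$ beats or ties everything except along edges whose weights sit just below the winning threshold, so $\tg$'s min-score is controlled by a small set of ``deficit'' edges, and (ii) each element-candidate $\re_i$ has a low min-score unless it is ``covered'' exactly once. The $t=q/3$ absent votes each get to raise a few margins by finite amounts; the budget of $q/3$ votes, each touching $\ell$ positions, is calibrated so that $\tg$ can simultaneously clear all its deficit edges precisely when the chosen votes select sets $\rs_{j_1},\dots,\rs_{j_{q/3}}$ that form an exact cover. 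If an element is covered twice or left uncovered, some candidate's min-score stays above $\tg$'s, or $\tg$ fails to reach the threshold on one of its edges. I expect that, as with the STV proof, the cleanest way to force ``exactly once'' rather than ``at least once'' is a counting/parity argument on the total vote budget: covering all $q$ elements with only $q/3$ triple-sets leaves no slack, so any double-coverage necessarily uncovers another element.

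I would then argue both directions. For the forward direction, given an exact cover $\rss^*$, I explicitly write down the $q/3$ absent votes (one per chosen set, ranking the relevant candidates in the top $\ell$ positions) and verify by direct computation of the resulting WMG margins that $\tg$'s min-score strictly exceeds every other candidate's min-score, so $\tg$ wins under the specified tie-breaking order. For the reverse direction, I suppose $\tg$ is a winner and show that the absent votes must induce an exact cover: because the total ``repair capacity'' of $q/3$ votes is exactly enough and no more, I trace which edges each vote can affect and conclude that every element is covered exactly once, recovering $\rss^*$. Finally, for the $\ell=2$ and $\ell=3$ cases I would note, as in Theorem~\ref{thm:STV}, that Lemma~\ref{lem:McGarvey} already holds for every $\ell\ge 2$, so the background WMG is constructible at every $\ell$, and the absent votes can be truncated to the allowed length without changing the decisive margins.

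The main obstacle I anticipate is the precise arithmetic calibration of the margins: I must choose the even background weights (all realizable via Lemma~\ref{lem:McGarvey}) and the threshold so that a single absent vote contributes the right incremental amount to each relevant edge, the ``exactly once'' structure is enforced purely by the vote budget, and the parity imposed by Lemma~\ref{lem:McGarvey} (weights change by $2$ per vote) lines up with the $\pm 1$ margin shifts an absent vote naturally produces. Getting these constants to make the YES/NO gap tight in both directions simultaneously is the delicate part; the combinatorial skeleton of the reduction is otherwise a fairly direct adaptation of the RXC3-to-Maximin manipulation argument of \citet{Xia09:Complexity}.
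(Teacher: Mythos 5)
Your overall skeleton matches the paper's proof: NP membership by simulation, hardness by reduction from RXC3 using Lemma~\ref{lem:McGarvey} to install the incidence structure as even-weight WMG edges, target $\tg$ ranked on top of every absent vote, set-candidates $\rs_j$ in the remaining slots, and each element-candidate $\re_i$ kept down by a weak outgoing edge toward its three containing sets. However, there is a genuine quantitative gap: you set the number of absent votes to $t = q/3$, while the paper sets $t = \frac{q}{3(\ell-1)}$ (duplicating the RXC3 instance so this is an integer). This difference is not cosmetic. In Maximin, every set $\rs_j$ ranked anywhere in an absent vote counts against every unranked $\re_i$, so the ``repair capacity'' of $P'$ is the total number of non-$\tg$ positions, which is $t(\ell-1)$, not $t$. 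With your budget $t=q/3$ and any $\ell \ge 3$, the absent votes can rank up to $(\ell-1)q/3 > q/3$ distinct sets, so your key tightness claim --- ``covering all $q$ elements with only $q/3$ triple-sets leaves no slack, so any double-coverage necessarily uncovers another element'' --- is simply false: there is slack, and double coverage is affordable. Consequently, in your reverse direction, $\tg$ winning only yields a set cover of size at most $(\ell-1)q/3$, not an exact $3$-cover; RXC3 could be a NO instance while your \wpp-$\maximin_{\ell}$ instance is YES, and the reduction does not establish hardness for any $\ell \ge 3$. Your argument is sound only at $\ell = 2$, where $t(\ell-1)=t=q/3$ and the two budgets coincide.

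The fix is exactly the paper's choice: make the \emph{slots}, not the votes, the scarce resource by setting $t = \frac{q}{3(\ell-1)}$, calibrating the gap between $\tg$'s min-score and each $\re_i$'s min-score to be $\frac{q}{3(\ell-1)}+1$ (so $\tg$ needs to head every vote to gain $\frac{q}{3(\ell-1)}$, and each $\re_i$ must lose $1$ against some ranked $\rs_j \ni \re_i$, with ties broken for $\tg$). Then the ranked sets number at most $q/3$, and mere coverage of all $q$ elements by at most $q/3$ triples already forces exactness --- no separate ``exactly once'' parity argument is needed. Note also that you cannot rescue $t=q/3$ by stipulating ``one set per vote, padding elsewhere'': nothing in the Maximin rule prevents an adversarial $P'$ from filling all $\ell-1$ slots with sets, and the reverse direction must hold against every $P'$, not just the ones of your intended form.
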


\begin{proof}
The membership of NP is held by running the vote and checking the winner. For the hardness, we give a reduction from RXC3.

For any RXC3 instance $(\res, \rss)$, where $\res = \{\re_1, \re_2,\ldots, \re_q\}$ and $\rss = \{\rs_1, \rs_2,\ldots, \rs_q\}$, we construct the following \wpp-${\maximin_{\ell}}$ instance.

\noindent\textbf{Candidates.} There are $2q+\ell$ candidates: $\res\cup \rss\cup\{\tg\} \cup W$, where $W = \{ w_1, w_2, \cdots, w_{\ell-1}\}$. We assume $\tg\succ \re_1\succ \re_2\succ \cdots \succ \re_q$ in tie-breaking. 

\noindent\textbf{Absent votes.} $t = \frac{q}{3(\ell -1)}$. (Without loss of generality, we assume that $q$ can be divided by $3(\ell -1)$. With not, we duplicate both $\res$ and $\rss$ for $3(\ell -1)$ times.)

Before presenting the profile, we give an intuition of our construction. The min-score of each $\re_i$ comes from three $\rs_j \ni \re_i$, and is exactly $\frac{q}{3(\ell-1)} + 1$ higher than $\tg$'s min-score. To make $\tg$ the winner, $\tg$ appears in all votes in $P'$ to increase its min-score by $\frac{q}{3(\ell-1)}$. Moreover, the $\rs_j$ that appears in $P'$ should consist of an exact 3-cover so that the min score of every $\re_i$ decreases by $1$. If not, $\tg$ will be beaten by some $\re_i$ not covered. 

\noindent\textbf{Known votes.} We construct most of the profile via Lemma~\ref{lem:McGarvey} as follows. 
\begin{itemize}
    \item For each $\re_i\in \res$ and $\rs_j \in \rss$ such that $\re_i\in \rs_j$, there is an edge $\rs_j \to \re_i$ with weight $q$. 
    \item For each $\re_i \in \res$ there is an edge $\re_i \to \tg$ with weight $q + \frac{q}{3(\ell -1)} + 2$; for each $\rs_j \in \rss$ there is an edge $\rs_j \to \tg$ with weight $q + \frac{q}{3(\ell -1)} + 2$. 
    \item For each $\wn_i\in W$, there is an edge $\re_j\to\wn_i$ for each $\re_i \in \res$ with weight $2q$ and an edge $\rs_j \to \wn_i$ for each $\rs_j \in \rss$ with weight $2q$.  
    \item For each $\wn_i\in W$, there is an edge $\tg\to\wn_i$ with weight $2q$. 
\end{itemize}
Additionally, there is one vote for $[\tg \succ \wn_1 \succ \cdots \succ \wn_{\ell - 1}]$. The WMG of $P$ is in Figure~\ref{fig:maximin}. 

\begin{figure}
    \centering
    \includegraphics[width=\columnwidth]{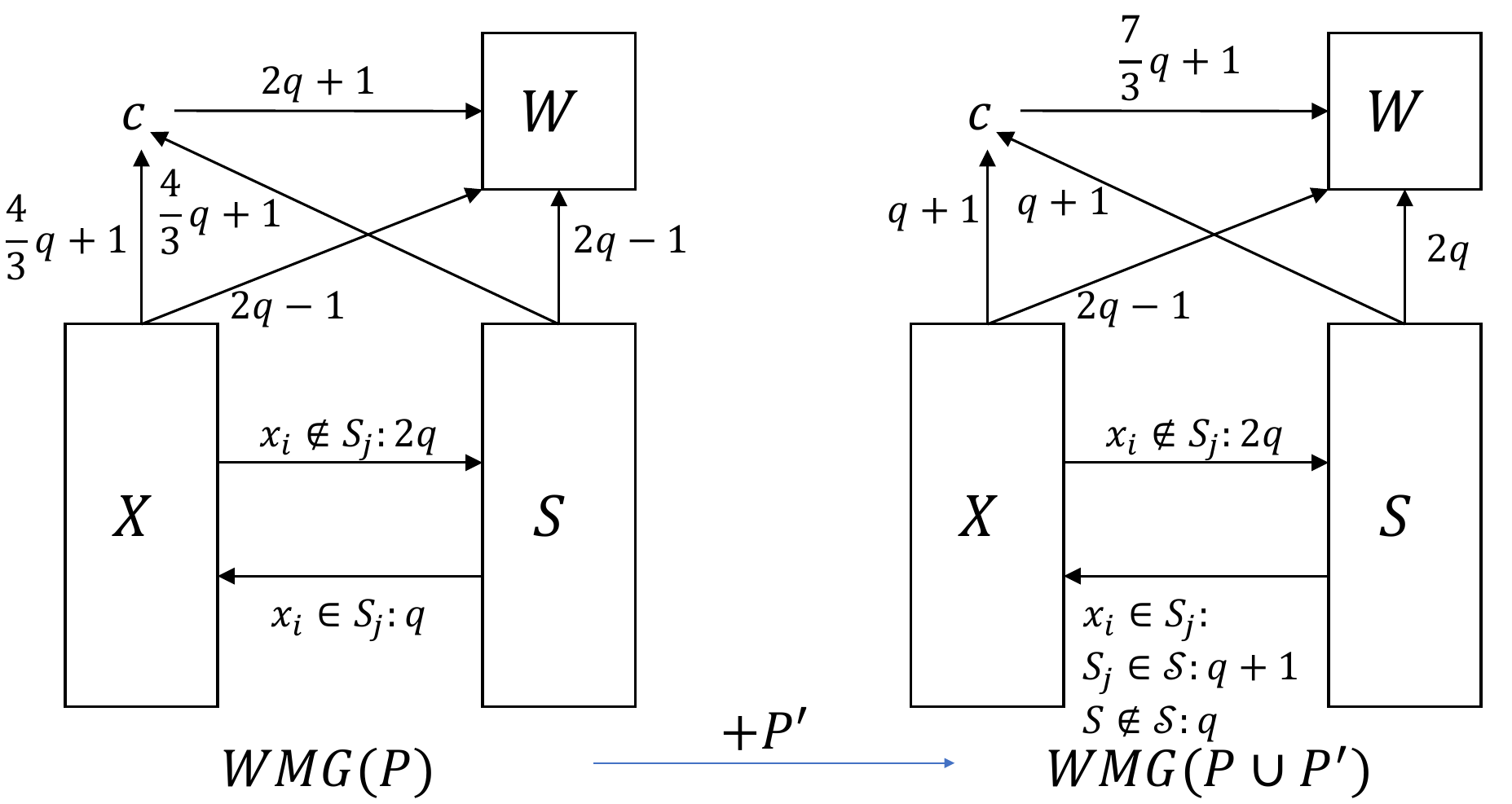}
    \caption{Weighted majority graph for Maximin}
    \label{fig:maximin}
\end{figure}

Then in profile $P$, the min score of $\tg$ is $-q - \frac{q}{3(\ell -1)}-1$, of each $\wn_i$ is $-2q-1$, of each $\re_i$ is $-q$ (from $\rs_j \ni \re_i$), and for each $\rs_j$ is $-2q$.

\paragraph{Suppose RXC3 is a YES instance.} And let $\rss^*$ be the exact 3-cover of $\res$. Then we construct $P'$ as follows: $\tg$ is ranked the top and followed by $\ell -1$ of $\rs_j$ in each vote. The set of all the $\rs_j$ ranked the second to the $\ell$-th in $P'$ (which is exactly $\frac{q}{3(l-1)} \times (l-1) = \frac{q}{3}$ of $\rs_j$) is exactly $\rss^*$.  Now we show that $\maximin_{\ell}(P\cup P') = \tg$. In $P\cup P'$, the min-score of $\tg$ is $-q-1$. For any $\re_i$, since $\rss^*$ is an exact 3-cover, there exists a $\rs_j\in \rss^*$ such that $\re_i \in \rs_j$. Since there is one vote that ranked $\rs_j$ higher than $\re_i$ in $P'$, the min-score of $\re_i$ in $P\cup P'$ is $-q -1$. The min-score of any $\rs_j$ or any $\wn_i$ will not exceed $-2q + 1$. Therefore, $\tg$ becomes the winner. 

\paragraph{Suppose \wpp-${\maximin_{\ell}}$ is a YES instance.} And let $P'$ be a profile of $\un$ votes such that $\maximin_{\ell}(P\cup P') = \tg$. We proceed with the proof in two steps. 

\noindent\textbf{Step 1.} Without loss of generality, we can assume that $\tg$ is ranked the first in all votes in $P'$.  Suppose there is a vote $V = [\tg_1 \succ \tg_2\succ \cdots, \tg_{\ell}]$ in $P'$ such that $\tg_i \in \res\cup\rss\cup\{c\}\cup W$ for $i = 1,2\cdots, \ell$, and $\tg_1 \neq \tg$. Then we construct a new vote $V'$ by the following rule:
\begin{itemize}
    \item If there is some $1 < i < \ell$ such that $\tg_i = \tg$, then $V' = [\tg \succ \tg_1 \succ \tg_{i-1}\succ \tg_{i+1} \succ \cdots \succ \tg_{\ell}]$. 
    \item If $\tg_{\ell} = \tg$ or $\tg$ is not ranked in $V$, then $V' = [\tg \succ \tg_1 \succ \cdots \succ \tg_{\ell -1}]$.
\end{itemize}
We create a new profile $P''$ such that $V$ is substituted by $V'$ while all other votes remain the same as in $P$. Firstly, since $\tg$ becomes the first rank, the min-score of $\tg$ will not decrease by this substitution. On the other hand, the order of other candidates is unchanged in $V'$, the min-score of any other candidate will not increase. Therefore, $P''$ is still a solution that makes $\tg$ a winner. 

\noindent\textbf{Step 2.} For every vote in $P'$, the second to $\ell$-th rank is some $\rs_j \in \rss$, and the set of all these $\rs_j$ (denoted by $\rss^*$) is an exact 3-cover of $\res$. 
Since $\tg$ is ranked top in all votes in $P'$, we know that the min score of $\tg$ in $P\cup P'$ is exactly $-q-1$. To make $\tg$ the winner, the min-score of any $\res_i$ should not exceed $-q-1$. Let $\rss^*$ be the multiset of all $\rs_j$ ranked by some vote in $P'$. Suppose $\rss^*$ is not an exact 3-cover of $\res$. Since there are at most $\frac{q}{3}$ positions for $\rs_j$, there must exist an $\re_i$ not been covered by $\rss$. Then we consider the min-score of $\re_i$. 
\begin{itemize}
    \item $\rs_j$ such that $\re_i \in \rs_j$. $\re_i$ is not covered, $\rs_j$ is not ranked the second in any votes in $P'$. Therefore, $P'$ does not change the $\re_i$'s score on $\rs_j$, which is $-q$. 
    \item $\rs_j$ such that $\re_i \not\in \rs_j$. we know that $P$ brings a score of $2q$ for $\re_i$ towards $\rs_j$. Since $|P'| = \frac{q}{3(\ell -1)} \le \frac{q}{3}$, the total score for $\re_i$ towards $\rs_j$ will not be lower than $\frac53q$. 
    \item $\re_h$ with $h\neq i$. $\re_i$ and $\re_h$ is tied in $P$. Therefore, the total score for $\re_i$ towards $\re_h$ will not be lower than $-\frac13 q$. 
    \item $\tg$ and $\wn_i$. the total score for $\re_i$ towards $\tg$ is exactly $q + 1$, and towards $\wn$ is at least $\frac{5}{3}q$
\end{itemize}
Therefore, the min-score of $\re_i$ is $-q$, which is higher than $\tg$. Therefore, $\tg$ cannot be the winner, which is a contradiction. Consequently, $S^*$ is an exact 3-cover of $\res$, and RXC3 is a YES instance. 
\end{proof}

\begin{thm}
\label{thm:maximinu}
    For any constant $L \ge 2$, \wpp-$\overline{\maximin}_{L}$ is NP-complete.
\end{thm}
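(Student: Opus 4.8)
The plan is to obtain NP-completeness by re-using the reduction behind Theorem~\ref{thm:maximin} almost verbatim, with $\ell$ replaced throughout by $L$, and then adding one counting argument that neutralizes the extra freedom the up-to-$L$ setting grants to the absent votes. Membership in NP is immediate: given a guessed $P'$ we run $\overline{\maximin}_L$ on $P\cup P'$ and check whether $\tg$ wins. For hardness I reduce from RXC3 (Definition~\ref{dfn:rxc}). The candidate set is $\res\cup\rss\cup\{\tg\}\cup W$ with $|W|=L-1$, the number of absent votes is $t=\frac{q}{3(L-1)}$, and the known profile $P$ is built exactly as in the Maximin proof: all even-weight edges are installed with Lemma~\ref{lem:McGarvey}, and the single vote $[\tg\succ\wn_1\succ\cdots\succ\wn_{L-1}]$ supplies the odd perturbations. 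The key point making the transfer legitimate is that a top-$L$ ranking is a legal up-to-$L$ vote, so $P$ and the McGarvey gadget carry over unchanged; the base min-scores are therefore identical, namely $-q-\frac{q}{3(L-1)}-1$ for $\tg$, $-q$ for each $\re_i$, $-2q$ for each $\rs_j$, and $-2q-1$ for each $\wn_i$.

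For the forward direction, given an exact $3$-cover $\rss^*$ (of size $q/3$), I partition $\rss^*$ into $t$ blocks of $L-1$ sets each and let $P'$ contain, per block, one full-length vote ranking $\tg$ first followed by the $L-1$ sets of that block. Ranking $\tg$ above every $\re_i$ and $\rs_j$ in all $t$ votes raises $\tg$'s min-score by exactly $t$ to $-q-1$; since $\rss^*$ is an exact cover, every $\re_i$ has exactly one $\rs_j\ni\re_i$ ranked above it, so each min edge $\re_i\to\rs_j$ drops to $-q-1$ as well, while the $\rs_j$ and $\wn_i$ min-scores stay far below. With tie-breaking $\tg\succ\re_1\succ\cdots$, $\tg$ wins.

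The backward direction is where the up-to-$L$ flexibility must be controlled, and it is the step I expect to be the main obstacle. Suppose $\overline{\maximin}_L(P\cup P')=\tg$. First bound $\tg$'s min-score: its base min edges $\tg\to\re_i$ and $\tg\to\rs_j$ have weight $-(q+\frac{q}{3(L-1)}+1)$, and any vote raises such an edge by at most $+1$, and only when $\tg$ is ranked above that candidate; hence over $t$ votes $\tg$'s min-score is at most $-q-1$, attained only when $\tg$ is ranked above every $\re_i$ and every $\rs_j$ in every vote (in particular $\tg$ occurs in each vote). Because an uncovered $\re_i$ keeps min-score $-q>-q-1$ — its edges $\re_i\to\rs_j$ for $\rs_j\ni\re_i$ are untouched while all its other out-edges remain strictly larger — victory of $\tg$ forces both that $\tg$'s min-score equals $-q-1$ and that every $\re_i$ is covered, i.e. some $\rs_j\ni\re_i$ appears ranked above $\re_i$ in $P'$.

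The counting then closes the argument and is the genuinely new ingredient. Since $\tg$ occupies one position in each vote, the positions available for set-candidates total at most $\sum_{V\in P'}(|V|-1)\le t(L-1)=\frac{q}{3}$, and covering all $q$ elements by size-$3$ sets from at most $q/3$ set-occurrences forces these to be $q/3$ distinct, pairwise-disjoint sets, i.e. an exact $3$-cover of $\res$. Equality must hold everywhere, so every vote has full length $L$ and contributes $L-1$ distinct sets; any truncated (length $<L$) vote, or any repeated set, strictly wastes coverage capacity and leaves some $\re_i$ uncovered, which then beats $\tg$. This is precisely the place where the proof departs from the top-$\ell$ case and where the ``$P'$ must use all $L$ positions'' phenomenon is established. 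Hence the sets appearing in $P'$ form an exact $3$-cover and RXC3 is a YES instance, completing the reduction.
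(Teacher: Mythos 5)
Your overall route is exactly the paper's: Theorem~\ref{thm:maximinu} is obtained there by rerunning the reduction of Theorem~\ref{thm:maximin} with $\ell$ replaced by $L$, and your candidate set, absent-vote budget, known profile (via Lemma~\ref{lem:McGarvey}), and forward direction all match that construction. The extra ingredient you supply --- a counting argument showing that the up-to-$L$ freedom to cast short votes cannot help --- is indeed the point the paper leaves implicit (it only remarks, for STV, that $P'$ must use all $L$ positions), and it is the right thing to prove.

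However, as written your backward direction has a circular step. Your bound of $t(L-1)=\frac{q}{3}$ available set positions rests on the premise that $\tg$ occupies a position in every vote of $P'$, and you justify that premise by asserting that victory forces $\tg$'s min-score to equal exactly $-q-1$. But the only thing you actually prove is that every $\re_i$ must be covered; nothing written rules out a profile $P'$ that omits $\tg$ from some votes. In that scenario $\tg$'s min-score drops to $-q-1-\delta$ with $\delta\ge 1$, each $\re_i$ then needs net coverage $1+\delta$ by a single $\rs_j\ni\re_i$ (still conceivably achievable), and each $\tg$-free vote has $L$, not $L-1$, slots for sets --- for $L=2$ this gives $tL=\frac{2q}{3}$ set occurrences, so the naive count no longer forces an exact cover, and the case must be argued rather than asserted. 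Two cheap fixes exist. One is already in the paper's top-$\ell$ proof and carries over to up-to-$L$ votes: Step~1 there transforms any winning $P'$, vote by vote, into a winning profile with $\tg$ ranked first everywhere ($\tg$'s min-score does not decrease, no other min-score increases); after this normalization $\tg$'s min-score is exactly $-q-1$ and your counting goes through verbatim. Alternatively, close it by a trade-off count: removing $\tg$ from a vote frees one position, adding at most $3$ units of coverage supply, but lowers $\tg$'s min-score by one, adding $q$ units to the coverage demand; since $q>3$ this is never profitable, so $\tg$ must head every vote, and your argument resumes. With either patch the proposal is a correct and more detailed version of the paper's one-line proof.
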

The proof also follows the top-$\ell$ case by replacing $\ell$ with $L$. 

\section{Copeland}
\begin{thm}
\label{thm:copeland}
    For any constant $\ell\ge 2$ and any $\alpha \in [0, 1]$, \wpp-$\cpd_{\ell}^{\alpha}$ is NP-complete.  
\end{thm}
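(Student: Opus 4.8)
The plan is to establish NP membership for free by guessing $P'$ and evaluating $\cpd_\ell^{\alpha}$ on $P\cup P'$ in polynomial time, and to prove hardness by a reduction from restricted exact three-cover (RXC3, Definition~\ref{dfn:rxc}) that mirrors the Maximin reduction of Theorem~\ref{thm:maximin}. As there, I would take candidate set $\res\cup\rss\cup\{\tg\}\cup W$ with a block $W$ of $\ell-1$ auxiliary candidates, fix the number of absent votes to $t=\frac{q}{3(\ell-1)}$, set the tie-break to rank $\tg$ above all elements, and install the backbone of the weighted majority graph via Lemma~\ref{lem:McGarvey}. The guiding idea is that each absent vote is forced into the form $[\tg\succ\rs_{j_1}\succ\cdots\succ\rs_{j_{\ell-1}}]$, so that the $t(\ell-1)=q/3$ ranked sets must form an exact $3$-cover, and that covering an element $\re_i$ is exactly what flips one of $\re_i$'s head-to-head comparisons and drops its Copeland score below that of $\tg$.

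Concretely, I would tune the backbone so that in $P$ the Copeland score of every element $\re_i$ is exactly one larger than that of $\tg$, while each set $\rs_j$ and each auxiliary $\wn_k$ sits well below $\tg$. Here $\tg$ should already beat all sets and auxiliaries by wide margins and lose head-to-head to every element by a margin too large for the $t$ absent votes to reverse, so that ranking $\tg$ first in $P'$ never changes $\tg$'s own score and only serves the normalization step. The decisive gadget is the element--set comparison: for each $\re_i\in\rs_j$ I would arrange $\omega(\re_i\to\rs_j)$ to be a small positive odd weight, so that a single appearance of $\rs_j$ in an absent vote, where $\rs_j$ is ranked and $\re_i$ is not, drives this edge strictly negative, turning one of $\re_i$'s wins into a loss and reducing its Copeland score by a full point. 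Since each $\rs_j$ contains three elements, one appearance of $\rs_j$ simultaneously penalizes all three, which is the book-keeping that matches the cover semantics.

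With this setup the two directions follow the Maximin template. For the forward direction, given an exact cover $\rss^*$ I would pack its $q/3$ sets into the $q/3$ free positions of the $t$ absent votes; every element is then covered exactly once, so every $\re_i$ drops by exactly one, tying $\tg$'s score, whereupon the tie-break makes $\tg$ the winner, while no set gains enough to overtake $\tg$. For the backward direction I would first show, exactly as in Step~1 of Theorem~\ref{thm:maximin}, that we may assume $\tg$ is ranked first in every absent vote, and then argue that each remaining position must hold a set rather than an element or auxiliary, and that the chosen sets must cover every element (an uncovered $\re_i$ keeps its higher score and beats $\tg$); the budget of only $q/3$ set-positions then forces exactness.

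The main obstacle I anticipate is making the argument uniform over every $\alpha\in[0,1]$, since a tied head-to-head contributes only $\alpha$ rather than a full point and would collapse the decisive ``drop by one'' to a ``drop by $\alpha$'' that is vacuous at $\alpha=0$. The remedy, and the most delicate part of the construction, is to guarantee that every comparison entering the decisive score count is strict in both $P$ and in $P\cup P'$ for admissible $P'$, so that $\alpha$ never appears in the relevant inequalities. I would build the bulk of the graph from Lemma~\ref{lem:McGarvey} with large even margins that no $O(q)$ perturbation can send to zero, and add a polynomial number of explicit votes that set precisely the element--set margins to small odd values, so that the only edges ever allowed to cross zero are these odd element--set edges, which cross it strictly. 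Checking that these explicit votes introduce no unintended ties or sign changes elsewhere, and that the resulting scores separate $\tg$ from the sets and auxiliaries with room to spare, is where the careful calculation will concentrate.
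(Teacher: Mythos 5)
Your overall skeleton (RXC3 reduction, Lemma~\ref{lem:McGarvey} for the backbone, $t=\frac{q}{3(\ell-1)}$, cover semantics via element--set margins) matches the paper, but the decisive gadget is arithmetically impossible. A single top-truncated vote changes any pairwise margin by at most $1$. Hence a \emph{strictly positive} margin $\omega(\re_i\to\rs_j)$, odd or not, can never be driven \emph{strictly negative} by one appearance of $\rs_j$: from weight $1$, one appearance yields weight $0$, i.e.\ a tie worth $\alpha$, not a loss. So the promised ``drop by a full point'' and the claim that ``$\alpha$ never appears in the relevant inequalities'' cannot be realized. With your score balance (each $\re_i$ exactly one point above $\tg$, and $\tg$'s own score static because it already beats all sets), a covered $\re_i$ ends at $\tg$'s score plus $\alpha$: for every $\alpha>0$ the covered element still beats or ties above $\tg$, and at $\alpha=1$ its score does not drop at all, so the forward direction fails for all $\alpha>0$. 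This is exactly why the paper does \emph{not} argue uniformly in $\alpha$: for $\alpha<1$ (sharpened to $\alpha<\frac{q-3}{q}$, which any fixed $\alpha<1$ satisfies after padding $q$) it lets covering turn a weight-$1$ win into a tie, a drop of $1-\alpha>0$, while $\tg$ simultaneously \emph{gains} $q$ points by flipping every $\rs_j\to\tg$ edge of weight $t-1$; for $\alpha=1$ it modifies the profile so that $\re_i$ starts \emph{tied} with each $\rs_j\ni\re_i$ and covering produces a genuine loss. Some such case split is unavoidable: one vote can move a margin by only $1$, so a covered element can lose either $1-\alpha$ (win$\to$tie) or $\alpha$ (tie$\to$loss), never $1$, and neither quantity is bounded away from $0$ uniformly over $\alpha\in[0,1]$.

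There is a second gap in how you handle $\tg$'s presence in $P'$. In the paper, the edges $\rs_j\to\tg$ have weight exactly $t-1$, so $\tg$ can close the gap to the elements only if \emph{all} $t$ absent votes rank $\tg$ above every $\rs_j$; this score-forcing argument (Step~1 of the paper's proof) is what limits $P'$ to $t(\ell-1)=\frac{q}{3}$ set-positions and makes the extracted cover exact. In your construction $\tg$ already beats every $\rs_j$ by a wide margin and its score is fixed, so nothing forces $\tg$ into the absent votes: the adversary then has $t\ell>\frac{q}{3}$ positions, and an overlapping, non-exact cover could make $\tg$ win while RXC3 is a NO instance, breaking the backward direction. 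You cannot repair this with the Maximin-style ``rank $\tg$ first w.l.o.g.'' step either: for Copeland, moving $\tg$ to the top evicts the last-ranked set from that vote, which \emph{raises} by $1$ the margins of the three elements of that set against it and can turn a covered element back into an uncovered one, so the normalized profile need not keep $\tg$ a winner. These two mechanisms --- how covering lowers an element relative to $\tg$, and how the construction forces $\tg$ into every absent vote --- are where the reduction actually lives, and both are missing or broken as proposed.
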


\begin{proof}
   The membership of NP is held by running the vote and checking the winner. For the hardness, we give a reduction from RXC3. We will apply Lemma~\ref{lem:McGarvey} to construct most of our profile yet use other structures when we need an odd number of weights. We first present the construction for $\alpha < 1$ (more precisely, $\alpha < \frac{q-3}{q}$, which converges to 1 as $q$ increases) and show how to modify the construction for $\alpha =1$. We assume that $q$ can be divided by $6(k-1)$. 

    \noindent\textbf{Candidates.} There are $2q+ \frac{q}{2}+3$ candidates: $\res\cup \rss\cup\{\tg, b\}\cup W$, where $W = \{ \wn_1, \wn_2, \cdots, \wn_{\frac{q}{2} + 1}\}$. W.l.o.g, we assume $\succ \re_1\succ \re_2\succ \cdots \succ \re_q \succ \tg\ \succ \wn_1\succ\cdots\succ\wn_{\frac{q}{2}+1}$ in tie-breaking. 

\noindent\textbf{Absent votes.} $t = \frac{q}{3(\ell-1)}$. 

Here we also give an intuition for the construction. The only edges that can be flipped by $P'$ are $\rs_j\to \tg$ and $\re_i \to \rs_j$ for $\re_i \in \rs_j$. We set the weights so that $\tg$ needs to win every $\rs_j$ to become the winner, which requires every vote in $P'$ to include $\tg$. On the other hand, every $\re_i$ needs to be tied with or be beaten by some $\rs_j \ni \re_i$ to make $\tg$ the winner. Therefore, the rest $\frac{q}{3}$ positions in $P'$ will be taken by $\rs_j$ that forms an exact 3-cover.

\noindent\textbf{Known votes.} 
We first construct the following edges with Lemma~\ref{lem:McGarvey}.
\begin{itemize}
    \item For each $\re_{i_1}$ and $\re_{i_2}$, if $i_2 - i_1 < \frac{q}{2} \mod q$, then there is an edge $\re_{i_1} \to \re_{i_2}$ of $4q$ weights; if $i_2 - i_1 > \frac{q}{2} \mod q$, then there is an edge $\re_{i_2} \to \re_{i_1}$ of weight $4q$. 
    \item For each $\re_{i_1}$ and $\re_{i_2}$ such that $i_2 - i_1 = \frac{q}{2} \mod q$, suppose $i_1 < i_2$. Then there are three edges: $\re_{i_1} \to \re_{i_2}$, $\re_{i_2} \to b$, and $b\to \re_{i_1}$, each of which of weight $4q$. 
    \item For each $\rs_{j_1}$ and $\rs_{j_2}$, if $j_2 - j_1 < \frac{q}{2} \mod q$, then there is an edge $\rs_{j_1} \to \rs_{j_2}$ of $4q$ weights; if $j_2 - j_1 > \frac{q}{2} \mod q$, then there is an edge $\rs_{j_2} \to \rs_{j_1}$ of weight $4q$.
    \item For each $\rs_{j_1}$ and $\rs_{j_2}$ such that $j_2 - j_1 = \frac{q}{2} \mod q$, suppose $j_1 < j_2$. Then there are three edges: $\rs_{j_1} \to \rs_{j_2}$, $\rs_{j_2} \to b$, and $\rs_{j_1} \to b$, each of which of weight $4q$. 
    \item For each $\wn_i$ and $\rs_j$, there is an edge for $\rs_j \to \wn_i$ of weight $4q$. 
    \item For each $\re_i$ and $\wn_j$, there is an edge for $\wn_j\to \re_i$ of weight $4q$. 
    \item For each $\re_i$ and $\rs_j$ such that $\re_i \not\in \rs_j$, there is an edge for each $\re_i \to rs_j$ of weight $4q$. 
    \item For each $\re_i$, there is an edge $\re_i\to \tg$ of weight $4q$. 
    \item For each $\wn_i$, there is an edge $\tg\to \wn_i$ of weight $4q$ and an edge $b \to \wn_i$ of weight $4q$.
    \item For each $\rs_j$, there is an edge $\rs_j \to \tg$ of weight $\frac{q}{3(\ell-1)} - 2$ .
    \item There is an edge of $b\to \tg$ with weights $4q$. 
\end{itemize}
Then we add up the following votes: 
\begin{itemize}
    \item For each $\re_i \in \rs_j$, there is one vote for $[\rs_j \succ \re_i \succ \wn_1\succ\cdots \succ \wn_{\ell-2}$.
    \item There are two votes for $[\tg \succ \wn_1\succ\cdots\succ \wn_{\ell-1}]$. 
\end{itemize}

The Copeland score for each candidate in $P$ is as follows: 
\begin{itemize}
    \item $\tg$: $\frac{q}{2}+1$
    \item $\re_i$: $q + \frac{q}{2} + 1$
    \item $\rs_j$: at most $q + 4$
    \item $\wn_i$: at most $q + \frac{q}{2}$. 
    \item $b$: $q + 2$.
\end{itemize}

Moreover, the weight of $\re_i \to \rs_j$ for each $\re_i\in \rs_j$ is $1$, and the weight for $rs_j \to \tg$ for each $\rs_j$ is $\frac{q}{3(\ell -1)} - 1$.

The weighted majority graph (with weights of key edges) of $P$ is in Figure~\ref{fig:cpd}.

\begin{figure}
    \centering
    \includegraphics[width = \columnwidth]{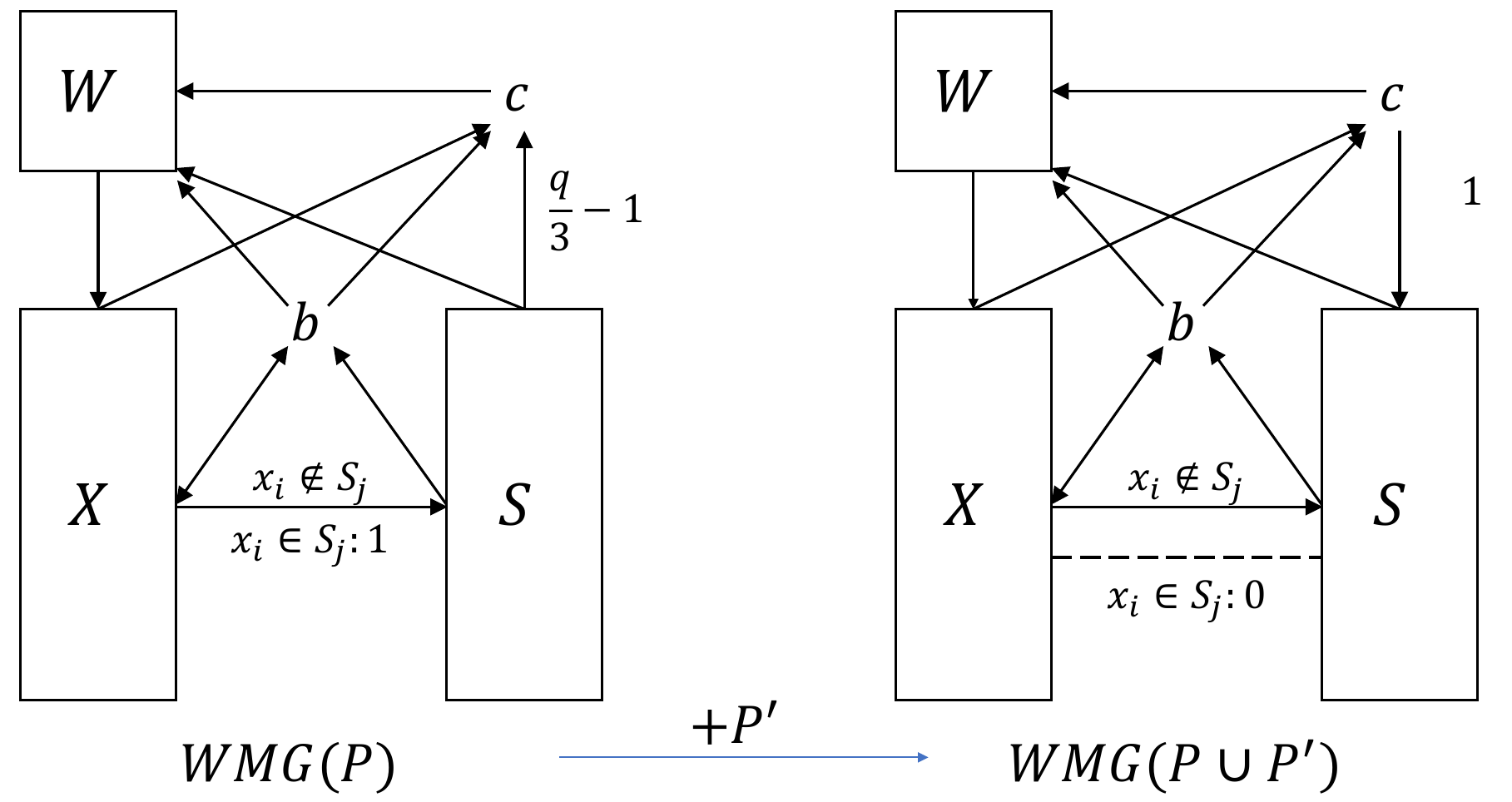}
    \caption{The WMG for Copeland.}
    \label{fig:cpd}
\end{figure}

\paragraph{Suppose RXC3 is a YES instance.} Suppose $\rss^*$ is a exact 3-cover of $\res$. We construct $P'$ as follows. $\tg$ is ranked the top and followed by $\ell -1$ of $\rs_j$ in each vote. The set of all the $\rs_j$ ranked second to the $\ell$-th in $P'$ is exactly $\rss^*$. Now consider the changes in the weighted majority graph. 
\begin{itemize}
    \item $\tg$ wins every $\rs_j$. 
    \item Each $\re_i$ wins all but one $\rs_j$ such that $\re_i \in \rs_j$ and $\rs_j \in \rss^*$, and $\re_i$ is tied with this $\rs_j$.  
    \item The directions of all other edges are unchanged. 
\end{itemize}
Then in $P\cup P'$, the Copeland score is as follows:
\begin{itemize}
    \item $\tg$: $q + \frac{q}{2}+1$
    \item $\re_i$: $(q-1) + \alpha + \frac{q}{2} + 1 = q + \frac{q}{2} + \alpha$
    \item $\rs_j$: at most $q + 3$
    \item $\wn_i$: at most $q + \frac{q}{2}$. 
    \item $b$: $q + 2$.
\end{itemize}
Therefore, $\tg$ becomes the winner. 

\paragraph{Suppose \wpp-$\cpd_{\ell}^{\alpha}$is a YES instance.} Suppose $P'$ is a profile that makes $\tg$ the winner. We give the proof in the following steps. 

Firstly, an important observation is that the edges in the graph whose directions can be changed by $P'$ are the edges between $\tg$ and $\rs_j$, and the edges between $\re_i \in \rs_j$. The edges inside $W$ can also be changed but will not affect the result. 

\noindent\textbf{Step 1.} All $\frac{q}{3}$ votes in $P'$ must contain $\tg$. If this is not the case, then in $P\cup P'$, $\tg$ will at best be tied with every $\rs_j$ in the head-to-head competition and get a Copeland score of at most $\alpha\cdot q + \frac{q}{2}+1$. On the other hand, the Copeland score of an $\re_i$ is at least $q-3 + \frac{q}{2} + 1$. Therefore, for every $\alpha < \frac{q-3}{q}$, $\tg$ cannot be the winner, which is a contradiction. 

\noindent\textbf{Step 2.} The set of all $\rs_j$ ranked in $P'$ is an exact 3-cover of $\res$. Let $\rss^*$ be the multiset of all $\rs_j$ in $P'$. Suppose $\rss^*$ is not an exact 3-cover of $\res$. Since there are at most $\frac{q}{3}$ of such $\rs_j$, there must exist an $\re_i^*$ not covered by $\rss^*$. Then the Copeland score of $\re_i^*$ will be $q + 1 + \frac{q}{2}$, which equals $\tg$'s score of $q + \frac{q}{2} + 1$. Then according to the tie-breaking rule, $\tg$ cannot be the winner, which is a contradiction. 

Therefore, $\rss^*$ is an exact 3-cover of $\res$, and RXC3 is a YES instance. 

\paragraph{$\alpha = 1$.}
For the $\alpha = 1$ case, we follow similar reasoning yet make a few changes in $P$: $\re_i$ is tied with all $\rs_j$ such that $\re_i \in \rs_j$, and the weight of each $\rs_j \to \tg$ becomes $\frac{q}{3(\ell - 1)}$. 
After adding up $P'$ to the profile, $\tg$ will be tied with every $\rs_j$, and $\re_i$ will be beaten by exactly one $\rs_j$. Then the score of $\tg$ is $q + \frac{q}{2} + 1$, each $\re_i$ is $q + \frac{q}{2}$, and $\tg$ becomes the winner. Otherwise, if some $\re_i$ is not covered, its score remains $q + \frac{q}{2} + 1$ and beats $\tg$. The rest of the reasoning remains the same as the $\alpha < 1$ case. 
\end{proof}

Similar to STV and Maximin, this proof also applies to the up-to-$L$ setting by replacing $\ell$ with $L$.  

\begin{thm}
    \label{thm:copelandu}
    For any constant $L \ge 2$ and any $\alpha \in [0, 1]$, \wpp-$\overline{\cpd}_{L}^{\alpha}$ is NP-complete. 
\end{thm}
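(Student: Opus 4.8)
The plan is to establish Theorem~\ref{thm:copelandu} by adapting the proof of Theorem~\ref{thm:copeland} essentially verbatim, replacing every occurrence of the top-$\ell$ parameter $\ell$ with the up-to-$L$ parameter $L$. The crucial observation that makes this adaptation work is that the hardness construction for $\cpd_{\ell}^{\alpha}$ uses votes of a fixed length only to supply McGarvey-style edges (Lemma~\ref{lem:McGarvey}) and to inject a small number of specifically-structured ballots, while the solution profile $P'$ is required to use the full ranking length in each vote in order to place $\tg$ on top followed by $L-1$ sets $\rs_j$. First I would verify that Lemma~\ref{lem:McGarvey} itself transfers: since an up-to-$L$ ballot may be a ranking of any length from $1$ to $L$, and the lemma already produces the desired single weight-$2$ edge using top-$\ell$ (hence top-$L$) votes, the same profile is valid in the up-to-$L$ setting, so all McGarvey-constructed edges in the WMG carry over unchanged.

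The main structural point I would check carefully is that the down-rounding/up-rounding distinction is irrelevant here, because Copeland depends only on the pairwise majority margins (the WMG), not on positional scores. Thus the only role of ranking length is which head-to-head comparisons a ballot contributes to, and a vote in $P'$ of the form $[\tg \succ \rs_{j_1} \succ \cdots \succ \rs_{j_{L-1}}]$ contributes exactly the pairwise wins of $\tg$ over each $\rs_{j_i}$ (and over all unranked candidates), together with $\rs_{j_i} \succ \rs_{j_{i'}}$ comparisons among the ranked sets and $\rs_{j_i}$ over unranked candidates. Since the argument in Theorem~\ref{thm:copeland} only ever uses that each $P'$-vote can simultaneously place $\tg$ above every $\rs_j$ it ranks and that it ranks exactly $L-1$ sets $\rs_j$, and since a top-$L$ ranking is itself a legal up-to-$L$ ballot, every step of the forward direction (constructing $P'$ from an exact $3$-cover) goes through verbatim. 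For the backward direction, I would re-examine Step~1 and Step~2: the counting that forces all $t = \frac{q}{3(L-1)}$ votes to contain $\tg$, and that forces the ranked $\rs_j$ to form an exact $3$-cover, depends only on the number of available $\rs_j$-slots, namely $t\cdot(L-1) = \frac{q}{3}$, which is the same quantity.

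The one place I expect a genuine obstacle, rather than a mechanical substitution, is confirming that an adversarial $P'$ cannot exploit the extra freedom of the up-to-$L$ setting, namely casting \emph{shorter} ballots. A top-$\ell$ voter must rank exactly $\ell$ candidates, whereas an up-to-$L$ voter may rank fewer, which could in principle let $\tg$ beat some $\rs_j$ in a head-to-head comparison while ranking fewer than $L-1$ of the $\rs_j$, thereby disrupting the clean $t\cdot(L-1) = \frac{q}{3}$ slot-counting. I would resolve this by observing that shorter ballots only \emph{weaken} the adversary: omitting a candidate from a ballot means that candidate is ranked below all ranked candidates, so $\tg$ placed on top of a length-$k$ ballot (with $k < L$) still beats every unranked $\rs_j$ in that comparison, but such a ballot provides \emph{fewer} non-trivial $\rs_j$-above-$\re_i$ flips than a full-length ballot, so it can only be worse for covering the $\re_i$. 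Hence in the YES direction the optimal strategy still uses full-length ballots, and in the backward direction a YES instance implies that at most $\frac{q}{3}$ $\rs_j$-slots are used, which still forces an exact $3$-cover.

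Concretely, I would state the proof as follows: the NP-membership is immediate by evaluating $\overline{\cpd}_L^{\alpha}$ on $P \cup P'$. For hardness, take the identical reduction from RXC3 as in Theorem~\ref{thm:copeland} with $\ell$ replaced by $L$, noting that Lemma~\ref{lem:McGarvey} and all WMG weights are unaffected by the setting change because Copeland is determined by the WMG alone. The forward direction builds $P'$ from an exact $3$-cover using full-length top-$L$ ballots, which are valid up-to-$L$ ballots. The backward direction runs Steps~1 and~2 unchanged, using the slot count $\frac{q}{3}$; the only addition is the short-ballot remark above, which shows the adversary gains nothing from truncation. This completes the reduction for every $\alpha \in [0,1]$, handling $\alpha = 1$ by the same modification to $P$ as in the top-$\ell$ proof, and establishes that \wpp-$\overline{\cpd}_{L}^{\alpha}$ is NP-complete for every constant $L \ge 2$.
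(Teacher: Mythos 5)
Your proposal takes exactly the paper's approach: the paper proves Theorem~\ref{thm:copelandu} with a single remark that the reduction of Theorem~\ref{thm:copeland} carries over to the up-to-$L$ setting by replacing $\ell$ with $L$. Your proof is correct and in fact more complete than the paper's, since your two key verifications --- that Copeland depends only on the WMG (so the rounding schemes are irrelevant) and that shorter ballots only weaken the adversary's ability to cover the $\re_i$, leaving the $t\cdot(L-1)=\frac{q}{3}$ slot-counting intact --- are precisely the details the paper leaves implicit.
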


\section{Positional Scoring Rules}
For (positional) scoring rules $\vs_{\ell} = (\scr_1, \scr_2, \cdots, \scr_{\ell})$. We show a special case when $\scr_2 = \cdots = \scr_{\ell}$, then the problem is in $P$. 

\begin{thm}
\label{thm:psr}
    For any $\ell\ge 2$ and $\vs_{\ell}$ such that $\scr_2 = \cdots = \scr_{\ell}$, \wpp-$\vs_{\ell}$ can be determined in polynomial time. 
\end{thm}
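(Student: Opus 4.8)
The plan is to reduce \wpp-$\vs_{\ell}$ to a simple capacity-checking problem, exploiting the fact that $\scr_2 = \cdots = \scr_{\ell}$ makes every top-$\ell$ vote, for scoring purposes, fully described by which candidate sits at the top (earning $\scr_1$) and which set of $\ell-1$ other candidates occupy the remaining ranked positions (each earning $\scr_2$); the internal order among those $\ell-1$ candidates is irrelevant. So an absent profile is nothing more than a choice, for each of the $\un$ votes, of a top candidate together with $\ell-1$ distinct ``secondary'' candidates.

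First I would establish, by an exchange argument, that without loss of generality $\tg$ occupies the top position in every absent vote. Given any winning profile $\prft$, in each vote where $\tg$ is not ranked first, either $\tg$ appears in a lower ranked position --- in which case I swap it into the top slot --- or $\tg$ is unranked --- in which case I overwrite the current top candidate with $\tg$. Each such local edit weakly raises $s(\cdot,\tg)$ (by $\scr_1-\scr_2\ge 0$ or by $\scr_1$), weakly lowers the score of the single displaced candidate, and leaves every other candidate's score unchanged; hence it preserves the property that $\tg$ is the winner. It therefore suffices to decide whether a $\tg$-first profile exists.

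Once $\tg$ is pinned to the top everywhere, its final score is fixed at $T = s(\prfn,\tg) + \un\,\scr_1$, and the only freedom left is how to fill the $\un(\ell-1)$ secondary slots with non-$\tg$ candidates. For each $a\neq\tg$ I would compute the largest number $U_a$ of secondary appearances that $a$ can tolerate before its score $s(\prfn,a)+\scr_2 x_a$ overtakes $T$ under the tie-breaking rule (treating $\scr_2=0$, and the case where $a$ already beats $T$ with no help, as degenerate subcases). Writing $x_a$ for the number of absent votes in which $a$ is ranked but not first, the instance is a YES instance iff there exist integers $0\le x_a\le\min(U_a,\un)$ with $\sum_{a\neq\tg}x_a=\un(\ell-1)$ that are \emph{simultaneously realizable} as $\un$ votes of $\ell-1$ distinct candidates each.

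The crux --- and the step I expect to be the main obstacle --- is handling this realizability constraint: a choice of counts $(x_a)$ must be packable into $\un$ votes in which no candidate repeats. This is precisely the existence of a simple bipartite graph with candidate-side degrees $x_a$ and vote-side degrees all equal to $\ell-1$, which the Gale--Ryser theorem characterizes by $\sum_a x_a=\un(\ell-1)$ together with $\sum_{\text{top }k}x_a\le \un\cdot\min(\ell-1,k)$ for every $k$. The key observation is that the cap $x_a\le\un$, forced because a candidate can be ranked at most once per vote, makes every Gale--Ryser inequality automatic: for $k\le\ell-1$ the sum of the $k$ largest $x_a$ is at most $k\un$, and for $k>\ell-1$ it is at most $\sum_a x_a=\un(\ell-1)$. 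Consequently realizability imposes no condition beyond the box constraints and the total, so the attainable values of $\sum_a x_a$ form the contiguous integer range $[0,\ \sum_{a\neq\tg}\min(U_a,\un)]$. The algorithm therefore outputs YES iff every $U_a\ge 0$ and $\sum_{a\neq\tg}\min(U_a,\un)\ge \un(\ell-1)$, a test computable in polynomial time. The case $\ell=2$ recovers Corollary~\ref{coro:psr2}, since there the realizability constraint is vacuous.
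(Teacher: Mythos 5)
Your proof is correct, but it takes a genuinely different route from the paper. The paper's proof shares your first step (the exchange argument pinning $\tg$ to the top of every absent vote), but then reduces the remaining assignment problem to a maximum-flow instance: a network with a node per secondary position of each absent vote, a node per (vote, candidate) pair enforcing that no candidate repeats within a vote, and edges into the sink of capacity $\lfloor (s(\prfn,\tg) - s(\prfn,a) + \un\scr_1)/\scr_2 \rfloor$ encoding each candidate's score budget; the instance is a YES iff the max flow equals $(\ell-1)\un$, computable by Ford--Fulkerson. You instead collapse the problem to a closed-form counting test, and your key contribution is recognizing that the per-vote distinctness constraint is \emph{never binding}: because each $x_a \le \un$, the Gale--Ryser inequalities for a bipartite degree sequence with uniform vote-side degrees $\ell-1$ hold automatically, so feasibility degenerates to $U_a \ge 0$ for all $a$ and $\sum_{a\neq\tg}\min(U_a,\un)\ge \un(\ell-1)$. (One could even avoid citing Gale--Ryser by a direct round-robin construction: list the candidate copies in a row and fill the $\un\times(\ell-1)$ grid of slots column-major; runs of length at most $\un$ never collide in a vote.) What each approach buys: yours gives a simpler algorithm (a couple of arithmetic checks rather than a flow computation) and makes the combinatorial reason for tractability transparent; the paper's flow formulation is more mechanical and transfers verbatim to the enumeration-plus-flow argument used for the down-rounding up-to-$L$ case in Theorem~\ref{thm:psrud}, though your criterion would serve there equally well. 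Both proofs handle the degenerate cases ($\scr_2=0$, or a candidate already beating $\tg$) the same way, as immediate YES-budget or NO conditions.
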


\begin{proof}
We convert the problem into a maximum flow problem. Given an instance of \wpp-$\vs_{\ell}$, we construct the following network flow. An illustration of the flow is in Figure~\ref{fig:psr}.

\noindent\textbf{Nodes}: $\{\bar{s}, \bar{t}\} \cup T \cup TM \cup M$. There are $(\ell-1)t + (t+1)(m-1) +2$ nodes.
\begin{itemize}
    \item $\bar{s}$ is the source node, and $\bar{t}$ is the sink node. 
    \item $T$ contains $(\ell-1)t$ nodes. For each absent vote $v$, there are $\ell-1$ nodes representing the second, third, $\cdots$, $\ell$-th position in the vote. The node representing $d$-th position of vote $v$ is denoted as $v_d$. 
    \item $TM$ contains $t(m-1)$ nodes. For each absent vote $v$ and each candidate $a\neq \tg$, there is a node $(v, a)$ in $TM$. 
    \item $M$ contains $m-1$ nodes, each representing a candidate other than $\tg$. 
\end{itemize}

\noindent\textbf{Edges}: $E_1\cup E_2\cup E_3\cup E_4$. 
\begin{itemize}
    \item For each node $v_d\in T$, there is an edge $s\to v_d$ with capacity 1. 
    \item For each vote $v$, each position $d$, and each candidate $a$, there is an edge $v_d \to (v, a)$ with capacity 1. 
    \item For each vote $v$ and each candidate $a\neq \tg$, there is an edge $(v, a)\to a$ with capacity $1$. 
    \item Let $\scr_2 = \cdots = \scr_{\ell} = A$, and let $\vs(P, a)$ be the score of candidate $a$ from profile $P$. For each node $a\in M$, there is an edge $a\to \bar{t} $ with capacity $\lfloor \frac{\vs(P, \tg) - \vs(P, a) + t\cdot \scr_1}{A} \rfloor$. (We assume all such capacity is non-negative. If this is not the case, the \wpp-$\vs_{\ell}$ is directly a NO instance.)
\end{itemize}

\begin{figure}
    \centering
    \includegraphics[width = \columnwidth]{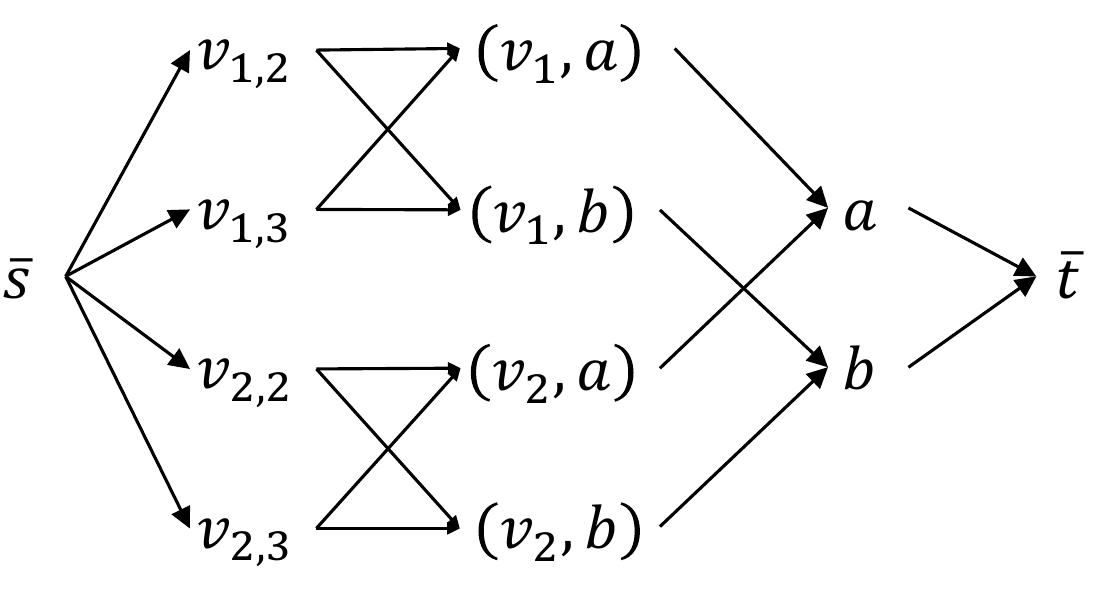}
    \caption{An illustration of the network when $M =\{a, b, c\}$, $t =2$, and $\ell =3$. }
    \label{fig:psr}
\end{figure}

Here we give an interpretation of the network. For a flow $f$:
\begin{itemize}
    \item $f(v_d \to (v, a))=1$ stands for that $a$ denotes that candidate $a$ is ranked at $d$-th in vote $v$. 
    \item $f((v,a)\to a) = 1$ stands for that $a$ appears in the top-$\ell$ ranking in $v$. The capacity ensures that every candidate appears at most once in $v$. 
    \item $A\cdot f(a\to \bar{t})$ stands for the total score candidate $a$ gets from $P'$. The capacity ensures that the total score of $a$ does not exceed the total score of $\tg$ from $P\cup P'$. 
\end{itemize}

We show that \wpp-$\vs_{\ell}$ is a YES instance if and only if the maximum flow of the network is $(\ell-1)t$. 

\paragraph{Suppose the maximum flow is $(\ell-1)t$.} Suppose $f$ is a maximum flow. Without loss of generality, we could assume that $f$ is an integer flow (by the Ford-Fulkerson algorithm~\citep{ford_fulkerson_1956}). We construct a solution $P'$ that makes $\tg$ a winner.

Firstly, $\tg$ is ranked the first in all $t$ votes in $P'$. 
For all edge $e = (v, a) \to a$ such that $f(e) = 1$, there must exist exactly one $2\le l\le k$ such that $f(v_d \to (v, a)) = 1$.  Then for the vote $v\in P'$, we assign $d$-th position of $v$ to be candidate $a$. Since the maximum flow is $(\ell-1)t$, all $v_d \in T$ has a candidate $a$ such that $f(v_d\to (v, a))=1$, and all $(\ell-1)t$ positions is assigned to some candidate $a\neq \tg$. Moreover, for each vote, $v$, each candidate $a$ appears at most once guaranteed by the capacity of $(v, a) \to a$. Therefore, each $v$ is a valid ranking. Finally, since the capacity of each $a\to \bar{t}$ is not reached, no candidate gets a higher score than $\tg$ (assume $\tg$ has the highest priority in tie-breaking), and $\tg$ will be the winner. 

\paragraph{Suppose \wpp-$\vs_{\ell}$ is a YES instance.} Suppose $P'$ is a solution that makes $\tg$ a winner. Without loss of generality, we could assume that $\tg$ is ranked the top at all votes $v\in P'$. (By ranking $\tg$ the top while keeping other order unchanged, the score of $\tg$ is non-decreasing and the score of any other candidate is non-increasing). Then we can construct a flow $f$ with flow amount $(\ell-1)t$. For each vote $v$, if candidate $a$ is ranked $d$-th in $v$, we set $f(v_d\to (v, a)) =1$ and $f((v,a) \to a) = 1$. Then we set the flow of all edges in $E_1$ and $E_4$ according to the conservation of flow. With similar reasoning, we can verify that the flow amount is $(\ell-1)t$.  
\end{proof}

Theorem~\ref{thm:psr} directly indicates that \wpp-$\vs_{\ell}$ is in P for any $\vs_{\ell}$ when $\ell =2$. 
\begin{coro}
\label{coro:psr2}
    \wpp-$\vs_2$ can be determined in polynomial time for any $\vs_2$. 
\end{coro}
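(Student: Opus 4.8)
The plan is to derive Corollary~\ref{coro:psr2} as the immediate specialization of Theorem~\ref{thm:psr} to $\ell = 2$. The only thing that requires any comment is that the structural hypothesis $\scr_2 = \cdots = \scr_{\ell}$ imposed by the theorem is no restriction whatsoever in this regime. First I would observe that the set of coordinates the rule is forbidden to distinguish — namely the second through $\ell$-th positions — is a singleton when $\ell = 2$. Hence the chain ``$\scr_2 = \cdots = \scr_{\ell}$'' collapses to the lone term $\scr_2$, and the equality is vacuously satisfied by every scoring vector $\vs_2 = (\scr_1, \scr_2)$ with $\scr_1 \ge \scr_2 \ge 0$. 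I would state this explicitly so that there is no doubt the hypothesis of Theorem~\ref{thm:psr} is met for an arbitrary $\vs_2$.

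Having matched the hypothesis, I would simply invoke Theorem~\ref{thm:psr} to conclude that \wpp-$\vs_2$ is decidable in polynomial time for every $\vs_2$. It is worth noting, as a sanity check, that the max-flow network built in the proof of Theorem~\ref{thm:psr} instantiates cleanly in this case: the layer $T$ carries $(\ell-1)t = t$ position-nodes (one per absent vote, for the single non-top slot), and the target flow value is likewise $t$, so no part of the construction degenerates. With $A = \scr_2$ the reduction and its correctness argument go through verbatim, so no further analysis is needed beyond the appeal to the theorem.

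Because every step is a direct substitution, I do not anticipate a genuine obstacle. The one place where a careless reading could go wrong is in confirming that the theorem's proviso is truly vacuous rather than secretly forcing $\scr_1 = \scr_2$. I would therefore emphasize that $\scr_1$ remains entirely unconstrained, so the corollary covers genuinely distinct values $\scr_1 > \scr_2$ (for instance the Borda-like vector on two ranked positions) and not merely the approval-type case $\scr_1 = \scr_2$. That clarification is the only substantive content of the write-up; everything else reduces to citing Theorem~\ref{thm:psr}.
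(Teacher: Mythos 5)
Your proposal is correct and matches the paper's own reasoning exactly: the paper derives the corollary by noting that for $\ell = 2$ the hypothesis $\scr_2 = \cdots = \scr_{\ell}$ of Theorem~\ref{thm:psr} is vacuously satisfied by every $\vs_2$, which is precisely your argument. Your additional sanity check that the flow network does not degenerate and that $\scr_1$ remains unconstrained is a fine clarification but adds nothing beyond the direct specialization.
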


In the up-to-$L$ setting, determining whether a candidate $\tg$ can be a winner under the up-rounding scoring rule can be achieved by checking if $\tg$ becomes the winner when all votes in $P'$ are $[c]$, i.e. rank $c$ alone. 

\begin{prop}[\citep{Narodytska14:Computational}, {Proposition 1}] For any constant $L \ge 1$ and any scoring vector $\vs_{L}$, \wpp-${\overline{\vs}_{L\uparrow}}$ can be computed in polynomial time.
\end{prop}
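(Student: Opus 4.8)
The plan is to solve \wpp-$\overline{\vs}_{L\uparrow}$ by testing one canonical completion rather than searching over all profiles. Let $\prft_0$ be the profile consisting of $\un$ copies of the singleton ranking $[\tg]$, compute the winner of $\prf\cup\prft_0$ under $\overline{\vs}_{L\uparrow}$, and answer YES exactly when $\overline{\vs}_{L\uparrow}(\prf\cup\prft_0)=\tg$. Since evaluating a positional scoring rule on a profile takes polynomial time, the whole procedure is polynomial, and the only thing requiring justification is \emph{correctness}: if \emph{some} profile of $\un$ up-to-$L$ votes makes $\tg$ win, then $\prft_0$ already does.

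The correctness will rest on a dominance argument powered by two features of up-rounding. First, ranking $\tg$ alone contributes exactly $\scr_1$ to $\tg$, and $\scr_1$ is the largest amount any single vote can contribute to any candidate (because $\scr_1\ge\scr_2\ge\cdots\ge\scr_L\ge 0$); crucially, in up-rounding this value depends only on the position $j=1$ and not on how many candidates the vote ranks, so $s(\prft_0,\tg)=\un\cdot\scr_1$ is the maximum of $s(\prft,\tg)$ over all admissible completions $\prft$. Second, the vote $[\tg]$ ranks no one else, so it contributes $0$ to every $a\ne\tg$; since $\scr_L\ge 0$, every vote contributes a nonnegative amount to each candidate, giving $s(\prft_0,a)=0\le s(\prft,a)$ for all $a\ne\tg$ and all $\prft$.

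Combining the two inequalities, I would show that for every opponent $a\ne\tg$ the margin of $\tg$ over $a$ is maximized by $\prft_0$:
\[
\bigl(s(\prf\cup\prft_0,\tg)-s(\prf\cup\prft_0,a)\bigr)-\bigl(s(\prf\cup\prft,\tg)-s(\prf\cup\prft,a)\bigr)=\bigl(\un\scr_1-s(\prft,\tg)\bigr)+s(\prft,a)\ \ge\ 0.
\]
Now if some $\prft$ makes $\tg$ the winner, then against each $a$ the margin $s(\prf\cup\prft,\tg)-s(\prf\cup\prft,a)$ is either strictly positive or zero with the tie broken toward $\tg$. The displayed inequality shows the corresponding margin under $\prft_0$ is at least as large, so the same weak comparison, and hence the tie-break, still favors $\tg$ against every $a$; thus $\tg$ wins under $\prft_0$. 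The converse is immediate, which yields the equivalence and reduces the decision problem to a single winner computation.

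The step I expect to be the crux is pinning down the \emph{length-independence} of $\tg$'s singleton score: it is precisely the up-rounding convention that makes $[\tg]$ worth the full $\scr_1$, so that shortening any completion vote down to $[\tg]$ never costs $\tg$ a point while simultaneously stripping every rival of the nonnegative support it would otherwise receive. This is exactly where the up-rounding hypothesis is indispensable, and it is also why the analogous claim fails for down-rounding, where a short vote would award $\tg$ only the smaller score $\scr_{L}$ instead of $\scr_1$; that setting consequently needs the separate, more restrictive treatment of Theorem~\ref{thm:psrud}.
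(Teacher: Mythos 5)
Your proposal is correct and takes essentially the same route as the paper: the paper (citing Narodytska and Walsh) likewise reduces \wpp-$\overline{\vs}_{L\uparrow}$ to a single winner computation on the canonical completion in which all $\un$ absent votes are the singleton ranking $[\tg]$, exactly the dominance argument you give, resting on the up-rounding fact that $[\tg]$ awards $\tg$ the full $\scr_1$ while giving every rival $0$.
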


For the down-rounding scoring rule, we show a similar result as the top-$\ell$ setting.

\begin{thm}
    \label{thm:psrud}
    For any constant $L\ge 2$ and $\vs_{L}$ such that $\scr_2 = \cdots = \scr_{L}$, \wpp-$\overline{\vs}_{L\downarrow}$ can be determined in polynomial time. 
\end{thm}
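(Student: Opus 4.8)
The plan is to mirror the max-flow argument behind Theorem~\ref{thm:psr}, but with an extra enumeration to cope with the fact that, under down-rounding, the score a vote awards its top candidate depends on the vote's length. Write $A = \scr_2 = \cdots = \scr_L$, so $\scr_1 \ge A \ge 0$. First I would establish a normal form for an optimal $P'$. In a length-$\ell$ vote the candidate in position $j$ receives $\scr_{L-\ell+j}$; since $\scr$ is non-increasing, moving $\tg$ to the first position of its vote never lowers $s(\cdot,\tg)$ and never raises any opponent's score, so without loss of generality $\tg$ is ranked first in every vote of $P'$. Moreover every non-top position $j\ge 2$ of any vote carries index $L-\ell+j \ge 2$ and hence awards exactly $A$, so boosting an opponent is never helpful; therefore any vote that is not a full length-$L$ ranking may be replaced by the singleton $[\tg]$ without decreasing $\tg$'s score or increasing any opponent's. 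After this normalization each vote of $P'$ is either (i) a \emph{full} vote, a length-$L$ ranking with $\tg$ on top awarding $\scr_1$ to $\tg$ and $A$ to each of $L-1$ distinct opponents, or (ii) a \emph{short} vote $[\tg]$, awarding $\scr_L = A$ to $\tg$ and nothing to anyone else.

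Consequently $\tg$'s final score is determined solely by the number $n_L$ of full votes: $S_\tg(n_L) = \vs(P,\tg) + n_L\,\scr_1 + (t-n_L)\,A$. The plan is to enumerate $n_L \in \{0,1,\dots,t\}$ and, for each value, decide by a single max-flow whether the $n_L(L-1)$ opponent positions forced by the full votes can be filled without letting any opponent catch $\tg$. Concretely, for fixed $n_L$ (assume $A>0$; the case $A=0$ is trivial, since opponents never gain from $P'$ and one only checks $S_\tg(t) \ge \max_{a\neq\tg}\vs(P,a)$) I would build the network of Figure~\ref{fig:psr}: a source $\bar s$, one node per full vote, the $m-1$ opponent nodes, and a sink $\bar t$; edges $\bar s \to v$ of capacity $L-1$ (each full vote supplies $L-1$ positions), edges $v \to a$ of capacity $1$ for every opponent $a\neq\tg$ (each opponent appears at most once per vote, guaranteeing a valid ranking), and edges $a \to \bar t$ of capacity $\lfloor (S_\tg(n_L) - \vs(P,a))/A \rfloor$ (clamped to $0$, and declaring this $n_L$ infeasible if the quantity is negative, i.e.\ if $a$ already beats $\tg$). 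The instance admits a winning $P'$ with exactly $n_L$ full votes iff the maximum flow equals $n_L(L-1)$.

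Correctness then follows exactly as in Theorem~\ref{thm:psr}: by the integrality of Ford--Fulkerson~\citep{ford_fulkerson_1956} a maximum flow of value $n_L(L-1)$ may be taken integral and decoded into $n_L$ full votes (reading off, for each saturated $\bar s\to v$, the $L-1$ opponents $a$ with $f(v\to a)=1$) together with $t-n_L$ copies of $[\tg]$; the $a\to\bar t$ capacities guarantee that no opponent exceeds $S_\tg(n_L)$, so $\tg$ wins under a tie-breaking that favors $\tg$. Conversely, any normalized winning $P'$ yields a feasible flow for its own value of $n_L$. Since $P'$ ranges over at most $t+1$ values of $n_L$ and each test is one max-flow on a graph of size $poly(m,t)$, the whole procedure runs in polynomial time. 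The main obstacle, and the only real departure from Theorem~\ref{thm:psr}, is precisely this coupling: because down-rounding ties $\tg$'s own score to the lengths chosen in $P'$, the sink capacities depend on the configuration and a single static network does not suffice; the normalization lemma, which collapses the exponential space of vote shapes to the two canonical types, together with the $O(t)$ enumeration over $n_L$, is what restores a flow formulation. As a bonus the network enforces the requirement $m \ge L$ for free: when $m < L$ no flow can saturate an edge $\bar s \to v$ of capacity $L-1$, so only $n_L = 0$ can succeed.
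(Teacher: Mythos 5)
Your proposal is correct and follows essentially the same route as the paper's proof: normalize $P'$ so that every vote is either a full top-$L$ ranking with $\tg$ first or the singleton $[\tg]$, enumerate the $t+1$ possible splits between the two vote types, and decide each case with a max-flow network in the style of Theorem~\ref{thm:psr}. Your collapsing of the per-position nodes into a single vote node of capacity $L-1$, and your explicit handling of the $A=0$ case, are only cosmetic refinements of the paper's construction.
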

\begin{proof}[Proof Sketch]
    Firstly, if a solution $P'$ exists, we could assume without loss of generality that $P'$ contains only top-1 ranking and top-$L$ ranking, and $\tg$ is ranked the top in all the votes. If this is not the case, we could substitute all non-top-$L$ votes into $[c]$, and rank $\tg$ the top of all the top-$L$ votes while keeping the order of other candidates unchanged. In this way, the scoring of $\tg$ is strictly increasing, while the score of all other candidates is non-increasing. 

    Then we give the computation solution scheme. First, we enumerate the number of top-1 votes and top-$L$ votes in $P'$. The sum of two kinds of votes is $\un$. Therefore, there are in total $t + 1$ cases. For each case, we set all top-1 votes to be $[c]$, and construct a maximum flow instance as in the proof of Theorem~\ref{thm:psr}. If there is some case where the maximum flow is above the threshold, then we output YES. Otherwise, when all cases the maximum flow is below its threshold, we output NO. 
\end{proof}

\section{Conclusion and Future Work}
We investigate the computational complexity of determining winners with absent votes when the votes are top-truncated. We have shown that the problem is in P when the number of candidates or the quantity of absent votes is bounded. In the unbounded cases, we show that the problem is NP-complete for STV, Maximin, and Copeland. We also give a special case of scoring rules where the problem can be computed in polynomial time. Winner determination with absent votes is closely related to the classic coalitional manipulation problem in social choice, yet previous results on full rankings do not directly extend to top-truncated settings. 

A question that remains open in our paper is the complexity of \wpp{} for general positional scoring rules. In the full-ranking setting, the complexity of coalitional manipulation is regarded as a challenging task. \citet{Xia10:Scheduling} shows the hardness under an artificially constructed scoring vector. \citet{Betzler11:Unweighted} shows the hardness under Borda. Other interesting directions include studying the possible winner problem under top-truncated votes, or more powerful manipulators that may truncate/extend certain votes. 





\bibliographystyle{ACM-Reference-Format} 
\bibliography{ref_new,references}


\clearpage
\onecolumn
\appendix
\section{Proof for STV when $\ell=2$ and $\ell = 3$}
 If these two cases, we only need to ``truncate'' the rankings of voters to top-2 or top-3 rank. The only influence of this change is that the votes that were supposed to be transferred to $\wn$ or $\tg$ now vanish. However, this will not change the result. Once one of $\wn$ and $\tg$ is eliminated, the other will get its votes and achieve a score of at least $24q -1$. Other alternatives, on the other hand, cannot have score higher than $12q + 8i+q/3 \le 20q + q/3 $. Therefore, truncating the ranks will not change the result, and the proof still holds. 

\begin{proof}[Proof of $\ell=2$]
For {\bf \boldmath $\ell = 2$}, we have the following construction: 

\begin{itemize}
\item {\boldmath $\prf_1$}: There are $12q$ votes of $[\tg\succ \wn]$.
\item {\boldmath $\prf_2$}: There are $12q-1$ votes of $[\wn\succ \tg]$.
\item {\boldmath $\prf_3$}: There are $10q+ 2q/3$ votes of $[\da_0\succ \wn]$.
\item {\boldmath $\prf_4$}: For every $i\in \{1,\ldots,q\}$, there are  $12q-2$ votes of $[\da_i\succ \wn]$.
\item {\boldmath $\prf_5^1$}: For every $i\in \{1,\ldots,q\}$, there are $6q+4i-6$ votes of $[\ba_i\succ\bb_i]$; and {\boldmath $\prf_5^2$}: for every $i\in \{1,\ldots,q\}$ and  every $j$ such that $\re_j\in \rs_i$, there are two votes of $[\ba_i\succ \da_j]$. 
\item {\boldmath $\prf_6^1$}:  For every $i\in \{1,\ldots,q\}$, there are $6q+4i-2$ votes of $[\bb_i\succ \ba_i]$; and {\boldmath $\prf_6^2$}: for every $i\in \{1,\ldots,q\}$, there are two votes of $[\bb_i\succ \da_0]$.
\end{itemize}
First still, {\bf the ultimate winner will be either $\tg$ or $\wn$}. Once one of $\tg$ or $\wn$ is eliminated in some round, the remaining other will gets its votes of $\prf_1$ or $\prf_2$: if $c$ is eliminated first, $\wn$ will get $12q$ votes from $\prf_1$; and if $\wn$ is eliminated first $\wn$ will get $12q-1$ votes from $\prf_2$. Therefore, the remaining one will have score of at least $24q-1$. On the other hand, all other alternatives cannot have such a high score:
\begin{itemize}
    \item $\ba_i$ and $\bb_i$ only gets votes from each other and from $\prft$, and cannot have score more than $12q + 8i+ q/3 \le 20q + q/3$. 
    \item $\da_0$ can get two votes from each $\bb_i$ and votes from $\prft$, and cannot have score more than $10q + 2/3q + 2q + q/3 = 13q$. 
    \item $\da_i$ can get votes from $\ba_i$ and from $\prft$. Since each $\re_j$ is in exactly three $\rs_i$, $\da_i$ can get two votes from exactly three $\ba_i$. Therefore, $\da_i$ cannot have score more than $12q - 2 + 6 + q/3 = 12q + q/3 +4$. 
\end{itemize}
Therefore, no other alternative can exceed the score of $24q-1$ at any time, and cannot be the winner. 
Now we show that \wpp-$STV_{\ell}$ is a YES instance if and only if RXC3 is a YES instance. 

{\bf Suppose RXC3 is a YES instance.} Let $\idx$ be the index set such that $|\idx| = q/3$ and $\bigcup_{i\in\idx} \rs_i = \res$. Then we construct $\prft$ as follows: for each $i\in \idx$, there is one vote of $[\bb_i\succ\ba_i]$. Then in the first $q$ round of voting, for each $i\le q$, if $i\in\idx$, $\ba_i$ is eliminated; otherwise $\bb_i$ is eliminated. If $\ba_i$ is eliminated, $6q+4i-6$ of its vote transfer to $\bb_i$, and for every $j$ such that $\re_j\in\rs_i$, $\da_j$ gets two of its vote. If $\bb_i$ is eliminated, $6q+4i-2$ of its vote transfer to $\ba_i$, and two transfer to $\da_0$. Therefore, in the beginning of $q+1$ round, the plurality scores of the remaining alternatives are as in the following table.
\renewcommand{\arraystretch}{1.5}
\begin{center}
\begin{tabular}{|@{\ }c|@{\ }c@{\ } @{\ }|@{\ }c@{\ }|@{\ }c@{\ }|@{\ }c@{\ }|@{\ }c@{\ }|}
\hline  
Rd. & $\wn$ & $\tg$ & $\ba_i$ or $\bb_i$& $\da_0$& $ \da_i$\\
\hline $q+1$& $12q-1$ & $12q$ &  \begin{tabular}{@{}c@{}}$12q+8i-1$ or  \\$12q+8i-5$\end{tabular}& $12q $ & $12q $\\
\hline
\end{tabular}
\end{center}
\renewcommand{\arraystretch}{1}
Therefore, $\wn$ is eliminated in round $q+1$, whose votes transfer to $\tg$, and it's not hard to verify that $\tg$ will be the ultimate winner.

{\bf Suppose \wpp-$STV_{\ell}$ is a YES instance.} We prove that RXC3 is a YES instance in the following steps. 

{\bf Step 1.} In the first $q$ rounds, exactly one of $\ba_i$ and $\bb_i$ is eliminated for all $i\le q$. Firstly, the initial score of $\ba_i$ and $\bb_i$ is at most $6q+4i+q/3 \le 10q+q/3$, while the score of other alternatives is at least $10q + 2/3q$. Moreover, $\ba_i$ and $\bb_i$ will not get any vote transfer from alternatives expect for each other. On the other hand, once one of $\ba_i$ and $\bb_i$ is eliminated, the other gets the transferred votes and has score more than $12q$. Therefore, in the first $q$ round, in each round either $\ba_i$ or $\bb_i$ is eliminated for a distinct $i$. 

{\bf Step 2.} Let $\idx = \{i:\ \ba_i\text{ is eliminated in the first $q$ rounds.}\}$. Then $\idx$ must be the index set of an RXC3 solution.  

Suppose it is not the case. Firstly, $|\idx| \le q/3$. For each $i \in \idx$, $\bb_i$ needs at least one vote for $P'$ to win $\ba_i$ in the round that $\ba_i$ is eliminated. And once $\bb_i$ is not eliminated, this vote follows $\bb_i$ to the round $q+1$ and cannot contribute to another $\bb_i$'s winning. Therefore, since $t = q/3$, there are at most $q/3$ of $\bb_i$ that beats $\ba_i$. 

Now suppose $|\idx| = q/3$ and $\idx$ is not a solution. Then there exists some $\re_j \in \res$ that is not covered. In this case, for all $i$ such that $\re_j\in\rs_i$, $\bb_i$ is eliminated. Therefore, $\da_j$ does not get any vote transfer from the first $q$ rounds. On the other hand, all $q/3$ votes contribute to some $\bb_i$ in the $q+1$ round and cannot contribute to $\da_j$. Therefore, $\da_j$ has a score of $12q - 2$ in the $q+1$ round. Therefore, one of such $\da_j$ is eliminated in this round, and its votes are transferred to $\wn$. Then $\wn$ has a score of at least $24q -3$ which exceeds $\tg$ all the time. Therefore, $\tg$ cannot be the winner.

Finally, suppose $|\idx| < q/3$. Then there are at least $q - 3|\idx|$ of $\re_j$ not covered, and all the corresponding $\da_j$ do not get transferred in the first $q$ rounds. If they shall not be eliminated in the $q+1$ round, there needs to be at least 1 vote for each of them, which is $q - 3|\idx|$ votes from $P'$. However, since all $|\idx|$ votes contribute to some $\bb_i$, there are only $q/3 - |\idx|$ and cannot cover all $\da_j$. Therefore, one $\da_j$ is eliminated in this round, and its votes are transferred to $\wn$. Then $\wn$ has a score of at least $24q -3$ which exceeds $\tg$ all the time. Therefore, $\tg$ cannot be the winner.

Therefore, once \wpp-$STV_{\ell}$ is a YES instance, the index set of eliminated $\ba_i$ in the first $q$ rounds forms the index set of a solution to the RXC3. Therefore, RXC3 is a YES instance. 
\end{proof}

\begin{proof}[Proof of $\ell = 3$]
For {\bf \boldmath $\ell = 3$}, we have the following construction: 

\begin{itemize}
\item {\boldmath $\prf_1$}: There are $12q$ votes of $[\tg\succ \wn]$.
\item {\boldmath $\prf_2$}: There are $12q-1$ votes of $[\wn\succ \tg]$.
\item {\boldmath $\prf_3$}: There are $10q+ 2q/3$ votes of $[\da_0\succ \wn\succ \tg]$.
\item {\boldmath $\prf_4$}: For every $i\in \{1,\ldots,q\}$, there are  $12q-2$ votes of $[\da_i\succ \wn\succ\tg]$.
\item {\boldmath $\prf_5^1$}: For every $i\in \{1,\ldots,q\}$, there are $6q+4i-6$ votes of $[\ba_i\succ\bb_i\succ\wn]$; and {\boldmath $\prf_5^2$}: for every $i\in \{1,\ldots,q\}$ and  every $j$ such that $\re_j\in \rs_i$, there are two votes of $[\ba_i\succ \da_j\succ\wn]$. 
\item {\boldmath $\prf_6^1$}:  For every $i\in \{1,\ldots,q\}$, there are $6q+4i-2$ votes of $[\bb_i\succ \ba_i\succ\wn]$; and {\boldmath $\prf_6^2$}: for every $i\in \{1,\ldots,q\}$, there are two votes of $[\bb_i\succ \da_0\succ\wn]$.
\end{itemize}
First still, {\bf the ultimate winner will be either $\tg$ or $\wn$}. Once one of $\tg$ or $\wn$ is eliminated in some round, the remaining other will gets its votes of $\prf_1$ or $\prf_2$: if $c$ is eliminated first, $\wn$ will get $12q$ votes from $\prf_1$; and if $\wn$ is eliminated first $\wn$ will get $12q-1$ votes from $\prf_2$. Therefore, the remaining one will have score of at least $24q-1$. On the other hand, all other alternatives cannot have such a high score:
\begin{itemize}
    \item $\ba_i$ and $\bb_i$ only gets votes from each other and from $\prft$, and cannot have score more than $12q + 8i+ q/3 \le 20q + q/3$. 
    \item $\da_0$ can get two votes from each $\bb_i$ and votes from $\prft$, and cannot have score more than $10q + 2/3q + 2q + q/3 = 13q$. 
    \item $\da_i$ can get votes from $\ba_i$ and from $\prft$. Since each $\re_j$ is in exactly three $\rs_i$, $\da_i$ can get two votes from exactly three $\ba_i$. Therefore, $\da_i$ cannot have score more than $12q - 2 + 6 + q/3 = 12q + q/3 +4$. 
\end{itemize}
Therefore, no other alternative can exceed the score of $24q-1$ at any time, and cannot be the winner. 
Now we show that \wpp-$STV_{\ell}$ is a YES instance if and only if RXC3 is a YES instance. 

{\bf Suppose RXC3 is a YES instance.} Let $\idx$ be the index set such that $|\idx| = q/3$ and $\bigcup_{i\in\idx} \rs_i = \res$. Then we construct $\prft$ as follows: for each $i\in \idx$, there is one vote of $[\bb_i\succ\ba_i\succ\tg]$. Then in the first $q$ round of voting, for each $i\le q$, if $i\in\idx$, $\ba_i$ is eliminated; otherwise $\bb_i$ is eliminated. If $\ba_i$ is eliminated, $6q+4i-6$ of its vote transfer to $\bb_i$, and for every $j$ such that $\re_j\in\rs_i$, $\da_j$ gets two of its vote. If $\bb_i$ is eliminated, $6q+4i-2$ of its vote transfer to $\ba_i$, and two transfer to $\da_0$. Therefore, in the beginning of $q+1$ round, the plurality scores of the remaining alternatives are as in the following table.
\renewcommand{\arraystretch}{1.5}
\begin{center}
\begin{tabular}{|@{\ }c|@{\ }c@{\ } @{\ }|@{\ }c@{\ }|@{\ }c@{\ }|@{\ }c@{\ }|@{\ }c@{\ }|}
\hline  
Rd. & $\wn$ & $\tg$ & $\ba_i$ or $\bb_i$& $\da_0$& $ \da_i$\\
\hline $q+1$& $12q-1$ & $12q$ &  \begin{tabular}{@{}c@{}}$12q+8i-1$ or  \\$12q+8i-5$\end{tabular}& $12q $ & $12q $\\
\hline
\end{tabular}
\end{center}
\renewcommand{\arraystretch}{1}
Therefore, $\wn$ is eliminated in round $q+1$, whose votes transfer to $\tg$, and it's not hard to verify that $\tg$ will be the ultimate winner.

{\bf Suppose \wpp-$STV_{\ell}$ is a YES instance.} We prove that RXC3 is a YES instance in the following steps. 

{\bf Step 1.} In the first $q$ rounds, exactly one of $\ba_i$ and $\bb_i$ is eliminated for all $i\le q$. Firstly, the initial score of $\ba_i$ and $\bb_i$ is at most $6q+4i+q/3 \le 10q+q/3$, while the score of other alternatives is at least $10q + 2/3q$. Moreover, $\ba_i$ and $\bb_i$ will not get any vote transfer from alternatives expect for each other. On the other hand, once one of $\ba_i$ and $\bb_i$ is eliminated, the other gets the transferred votes and has score more than $12q$. Therefore, in the first $q$ round, in each round either $\ba_i$ or $\bb_i$ is eliminated for a distinct $i$. 

{\bf Step 2.} Let $\idx = \{i:\ \ba_i\text{ is eliminated in the first $q$ rounds.}\}$. Then $\idx$ must be the index set of an RXC3 solution.  

Suppose it is not the case. Firstly, $|\idx| \le q/3$. For each $i \in \idx$, $\bb_i$ needs at least one vote for $P'$ to win $\ba_i$ in the round that $\ba_i$ is eliminated. And once $\bb_i$ is not eliminated, this vote follows $\bb_i$ to the round $q+1$ and cannot contribute to another $\bb_i$'s winning. Therefore, since $t = q/3$, there are at most $q/3$ of $\bb_i$ that beats $\ba_i$. 

Now suppose $|\idx| = q/3$ and $\idx$ is not a solution. Then there exists some $\re_j \in \res$ that is not covered. In this case, for all $i$ such that $\re_j\in\rs_i$, $\bb_i$ is eliminated. Therefore, $\da_j$ does not get any vote transfer from the first $q$ rounds. On the other hand, all $q/3$ votes contribute to some $\bb_i$ in the $q+1$ round and cannot contribute to $\da_j$. Therefore, $\da_j$ has a score of $12q - 2$ in the $q+1$ round. Therefore, one of such $\da_j$ is eliminated in this round, and its votes are transferred to $\wn$. Then $\wn$ has a score of at least $24q -3$ which exceeds $\tg$ all the time. Therefore, $\tg$ cannot be the winner.

Finally, suppose $|\idx| < q/3$. Then there are at least $q - 3|\idx|$ of $\re_j$ not covered, and all the corresponding $\da_j$ do not get transferred in the first $q$ rounds. If they shall not be eliminated in the $q+1$ round, there needs to be at least 1 vote for each of them, which is $q - 3|\idx|$ votes from $P'$. However, since all $|\idx|$ votes contribute to some $\bb_i$, there are only $q/3 - |\idx|$ and cannot cover all $\da_j$. Therefore, one $\da_j$ is eliminated in this round, and its votes are transferred to $\wn$. Then $\wn$ has a score of at least $24q -3$ which exceeds $\tg$ all the time. Therefore, $\tg$ cannot be the winner. 

Therefore, once \wpp-$STV_{\ell}$ is a YES instance, the index set of eliminated $\ba_i$ in the first $q$ rounds forms the index set of a solution to the RXC3. Therefore, RXC3 is a YES instance. 
\end{proof}

\end{document}